\documentclass{llncs}


\usepackage{amssymb}            
\usepackage{amsmath,mathtools}  
\usepackage{tikz} 
\usetikzlibrary{chains,positioning,scopes,arrows,plotmarks} 
\usepackage{comment}                                
\usepackage{array}                              
\usepackage{xspace}
\usepackage{color}
\usepackage[utf8]{inputenc}
\usepackage{psfrag}

\newcommand{\ourgame}{celebrity game\xspace}
\newcommand{\ourgames}{celebrity games\xspace}
\newcommand{\Ourgames}{Celebrity  games\xspace}
\newcommand{\ourgamen}{star celebrity game\xspace}
\newcommand{\ourgamens}{star celebrity games\xspace}

\newcommand{\stargame}{star celebrity game\xspace}
\newcommand{\stargames}{star celebrity games\xspace}

\newcommand{\maxG}{Max game\xspace}

\newcommand{\sumG}{Sum game\xspace}

\newcommand{\maxbd}{MaxBD game\xspace}
\newcommand{\maxbds}{MaxBD\ games\xspace}
\newcommand{\sumbd}{SumBD game\xspace}

\newcommand{\cmaxbd}{c^\textsf{MaxBD}}
\newcommand{\Cmaxbd}{C^\textsf{MaxBD}}



\newcommand{\optv}{\textsf{opt}\xspace}
\newcommand{\OPT}{\textsc{opt}\xspace}

\newcommand{\NE}{\textsc{ne}\xspace}

\newcommand{\NES}{\text{NE}\xspace}
\newcommand{\NCG}{\text{NCG}\xspace}

\newcommand{\PoA}{\text{PoA}\xspace}
\newcommand{\PoS}{\text{PoS}\xspace}
\newcommand{\NP}{\textsf{NP}\xspace}
\newcommand{\wM}{w_{max}}
\newcommand{\wm}{w_{min}}
\newcommand{\remove}[1]{}

\author{C. \`Alvarez \and M. Blesa \and A. Duch \and A. Messegu\'e \and M. Serna}
\institute{ALBCOM Research Group, Computer Science Department, UPC, Barcelona\\
\email{\{alvarez,mjblesa,duch,messegue,mjserna\}@cs.upc.edu}}

\title{Stars and Celebrities: A Network  Creation Game}

\pagestyle{plain}

\begin{document}

\maketitle
    
\begin{abstract}

We introduce  \emph{\Ourgames}, a new model of network creation games. 
In this model players have weights (being $W$ the sum of all the player's weights) 
and there is a critical distance $\beta$ as well as a link cost $\alpha$.  
The cost incurred by a player depends on the cost of establishing links 
to other players and on the sum of the weights of those players that remain 
farther than the critical distance. 
Intuitively,  the aim of any player is to be relatively close 
(at a distance less than $\beta$) from the rest of players, 
mainly of those having high weights. 
The main features of celebrity games are that: 
computing the best response of a player is NP-hard if $\beta> 1$ 
and polynomial time solvable otherwise;
they always have a pure Nash equilibrium; the family of celebrity
games having a  connected Nash equilibrium  is characterized 
(the so called {\em star celebrity games}) and bounds on the 
diameter of the resulting equilibrium graphs are given; 
a special case of star celebrity games share its set of Nash 
equilibrium profiles with the MaxBD games with uniform 
bounded distance $\beta$ intoduced in \cite{BiloGP:12}. 
Moreover, we analyze the Price of Anarchy (\PoA) and of Stability (\PoS) 
of celebrity games and give several bounds. These are that: for non-star celebrity games
$\PoA = \PoS = max\{1,W/\alpha\}$; for star celebrity games $\PoS = 1$
and $\PoA = O(min\{n/\beta, W\alpha\})$ but if the Nash Equilibrium is a 
tree then the \PoA is $O(1)$; finally, when $\beta=1$ the \PoA is at most 2. 
The upper bounds on the \PoA are complemented with some lower bounds for $\beta=2$.  
\end{abstract} 
 




%
\section{Introduction}

The global growth of Internet and social networks 
usage has been accompanied by an increasing 
interest to model theoretically their creation as 
well as their behavior. 
In particular, network creation games (\NCG)
aim to model Social Networks and Internet 
by simulating the creation of a decentralized 
and non-cooperative communication network 
among $n$ players (the network nodes).

From the seminal paper~\cite{Fe:03} 
several proposals have been made in the area of \NCG.
In the original model, 
the goal of each player is to have, 
in the resulting network, 
all the other nodes as close as possible 
while buying as few links as possible~\cite{Fe:03}. 
Several assumptions are made: all the players 
have the same interest 
(all-to-all communication pattern with identical weights);
the cost of being disconnected is infinite; 
and the edges paid by one node can be used by others. 
Formally, a game $\Gamma$ in this model  is defined as a tuple $\Gamma=\langle  V, \alpha\rangle$,  
where $V$ is the set of $n$ nodes and $\alpha$ the cost 
of establishing a link. 
A strategy for player $u\in V$ is a subset $S_u\subseteq V-\{u\}$, 
the set of players for which player $u$ pays for establishing a link.
The $n$ players and their joint strategy choices $S=(S_u)_{u\in V}$ 
create an undirected graph $G[S]$.
The cost function for each node
$u$ under strategy $S$ is defined by 
$c_u(S) = \alpha|S_u|+\sum_{v\in V}d_{G[s]}(u,v)$ 
where $d_{G[s]}(u,v)$ is the distance between nodes $u$ and $v$ 
in graph $G[S]$. Because of the summation in the 
cost function this model is informally 
known as the \emph{\sumG} model. 
By changing the cost function to $c_u(S)=\alpha |s_u| + \max \{d_{G[S]}(u,v) | v\in V\}$
as proposed in \cite{Demaineetal:07-12} one obtains the 
\emph{\maxG} model. 

From here on several versions and variants have been considered. 
Instead of buying links unilaterally, 
\cite{CorboParkes:05} proposed the possibility of having links 
formed by bilateral contracting: both endpoints must agree before 
creating a link between them and the two players share (half-half) the cost 
of establishing the link. 
\NCG models can be cooperative --a possibility 
introduced by \cite{Albersetal:06}-- and therefore 
any node can purchase any amount of any link in the resulting graph, 
and a link can be created when its cost is covered by a set of players.
The model studied in \cite{BiloGP:15} (see also  \cite{BiloGP:12})  
considers the notion of bounded distance 
per player and propose  two variants: the \maxbd and the \sumbd, 
corresponding to the original Max and Sum cost models respectively.   
The cost in those games depends on whether 
the player's eccentricity is smaller or equal than 
the associated bounded distance. 
In that case a player pays  the number of established
links, otherwise its cost is  infinite.
For further variants we refer the 
  Interested reader 
to~\cite{Demaineetal:07-12,LeonardiSankowski:07,BrandesHN:08,DemaineHMZ:09,Lenzner:11,AlonDHL:13,AlonDHK:14,BiloGLP:15,NikolePRS:15,CordL:15,Ehsani:2015} among others.

We introduce  \emph{\ourgames} a \NCG  
where players have different weights and 
share a common distance bound. As far as we understand, not all the nodes 
in Internet based networks have the same importance. 
It is though natural to consider players with different relevance weights. 
In such a setting, the cost of being far (even if connected) 
from important nodes (the ones with high weight)
should be higher than the cost of having them close. 
Intuitively, the goal of each player  in \ourgames is to 
buy as few links as possible in order to 
have the high-weighted nodes (or groups of nodes) 
closer to the given critical distance. 
Observe that if the cost of establishing links is higher 
than the benefit of having close a node (or set of nodes),  
players might rather  prefer to stay either far or even 
disconnected from it.

Our aim is to study 
the combined effect of having players with different weights 
that share a common bounded distance. 
Although  heterogeneous players have 
been considered recently in the context of \NCG 
under bilateral contracting ~\cite{MMO-EC14,AlvarezSF:15},
and  \cite{BiloGP:15} 
consider the notion of bounded distance, 
to the best of our knowledge this is the first 
model that studies how  a  common critical distance, 
different  weights, and a link cost, altogether 
affect the individual preferences of the players.
 
In our model the cost of a player has two components.  
The first one is the cost of the 
links established by the node.      
The second one is the sum of the weights 
of those nodes that are farther away than the critical distance. 
More specifically, the parameters of a \ourgame are:  
a weight to each player;
a cost for establishing a link; and  a critical distance.
Formally,  a {\ourgame\ } is defined by 
$\Gamma= \langle  V, (w_u)_{u\in V}, \alpha,\beta \rangle$,  
where $V$ is  a set of nodes with weights $(w_u)_{u\in V}$, 
$\alpha$ is the cost of establishing a link and, 
$\beta$ establishes the desirable  distance bound. 
\Ourgames include  the \maxbds introduced 
in \cite{BiloGP:15} (see Section~\ref{sec:MaxBD} for the details). 
They capture not only the cases in which players are  indistinguishable  
but those cases  where  the players may have different weights  affecting 
differently  the costs of  the other players.

We analyze  the structural 
properties of the Nash equilibrium (\NE) graphs of \ourgames 
and their quality with respect to the optimal strategies 
under the usual social cost.  To do so we address the cases 
$\beta =1$ and $\beta >1$ separately.  
Notice that, for $\beta = 1$, each player  $u$ has to decide  
for every non-edge $(u,v)$ of the graph  to pay either $\alpha$ for the link   
or  $w_v$ (the weight of the non-adjacent node $v$)  
while, for $\beta>1$, every player $u$ has to choose for 
each non-edge $(u,v)$ between buying the link $(u,v)$ 
and paying  $\alpha$ minus  the sum of the weights of those 
nodes  whose distance to $u$ will become  less or equal 
than the critical distance $\beta$ or paying  the sum of  the 
weights of the nodes with distance to $u$ greater than $\beta$.

For the general case  $\beta>1$ our results can be summarized as follows:  
\begin{itemize}
\item Computing a best response for a player is  \NP-hard
\item The optimal social cost of a celebrity game $\Gamma$  
depends on the relation between the total sum of the weights $W$ and 
the cost $\alpha$ of buying a link: $\OPT(\Gamma)=\min \{\alpha,W\} (n-1)$.
Nevertheless, pure  \NE always exist and 
\NE graphs are either connected or  a  set of isolated nodes. 
Again, the relationship between the cost of establishing a link 
and the weight of the nodes leads to different types of \NE. 

\item We use the term  \emph{celebrity}  for a node whose weight 
is strictly greater than the cost of establishing a link.  
Having at least one celebrity guarantees that all \NE graphs
are connected, although there are \ourgames without celebrities 
that still have connected \NE graphs.  
In those games having a connected 
\NE graph, a star tree  is always a \NE graph. 
We called  this  subfamily of \ourgames \emph{\stargames}. 

\item For \stargames, we obtain a general upper bound of 
$2 \beta +1$ for the diameter of \NE graphs. 
In particular, if $G$ is a  \NE tree we show that $diam(G) \leq \beta +1$, 
otherwise $\beta / 2 < diam(G) \leq 2 \beta +1$.
The upper bound can be improved by considering the relationship 
between $\alpha$ and the maximum and minimum weights, 
$w_{max}$ and $w_{min}$, respectively.
So, if $w_{min} \leq \alpha < w_{max}$, then $diam(G) \leq 2 \beta$. 
On the contrary, if $\alpha < w_{min}$, then $diam(G)\leq \beta$.

\item  For \stargames with $\alpha < w_{min}$,  
we show that the set of  \NE strategy profiles coincides with   
the set of \NE strategy profiles  of  a \maxbd  with uniform bounded distance $\beta$.

\item We find several bounds on the Price of Anarchy (\PoA) and of stability (\PoS).  
For non-\stargames \PoS = \PoA$ =\max{\{1, W/\alpha\}}$.  
For \stargames the \PoS is 1 and
we obtain a general upper bound of 
$O(\min\{n/\beta,W/\alpha\})$ for the \PoA. We also show particular games 
on $n$ players having $\PoA=\Omega(n)$, for $\beta=2$.  
To complement those results we prove that 
the \PoA on \NE trees is constant (special cases like trees 
are also considered in the literature, see for instance \cite{AlonDHL:13,AlonDHK:14,Ehsani:2015}.
\end{itemize}
Finally, for the  particular case $\beta=1$, 
we show that computing a best response for a player  is polynomial time solvable 
and that the \PoA is at most  2.

\smallskip

The paper is organized as follows. 
In Section~\ref{sec:model} we introduce the basic definitions and the  \ourgames model. 
We also show that computing a  best response is $\NP$-hard.
In Section \ref{sec:so-and-ne}  we set the fundamental properties of 
\NE and optimal  graphs. We characterize \stargames and we provide the first bounds 
for the \PoA and the \PoS.  
Section~\ref{sec:ne-diameter} is devoted to the study 
of the diameter of \NE graphs. Section~\ref{sec:MaxBD} 
studies the relation between the \maxbd model 
and the \ourgame model. In Section~\ref{sec:poa-lower-bound} we derive 
the bounds for the \PoA.  In Section~\ref{sec:poa-trees} 
we give the upper bound of the \PoA over \NE trees 
and  in Section~\ref{sec:beta1} we study the case $\beta=1$. 
Finally, we state some conclusions and 
open problems in Section~\ref{sec:conclu}.

\section{The Model}
\label{sec:model}
In this section we introduce  \ourgames\ 
 and we analyze the complexity  of computing a  best response. 
Let us start with some definitions. 
We use standard notation for graphs and strategic games. 
All the graphs in the paper are undirected unless explicitly said otherwise. 
Given a graph $G=(V,E)$ and $u,v\in V$, $d_G(u,v)$ denotes 
the \emph{distance} between $u$ and $v$ in $G$, 
i.e., the length of the shortest path from $u$ to $v$. 
The \emph{diameter} (or  \emph{eccentricity}) of a 
vertex $u\in V$ is $diam(u)=\max_{v\in V} d_G(u,v)$ 
and the \emph{diameter} of $G$ is $diam(G)=\max_{v\in V} diam(v)$.  
An \emph{orientation} of  an undirected graph is an assignment 
of a direction to every edge of the graph, turning it into a directed graph.
A \emph{bridge} is an edge whose deletion increases the number of 
connected components of  the graph.
For a  weighted set   $(V, (w_u)_{u\in V})$ we extend the weight function 
to subsets in the usual way.  For $U\subseteq V$, $w(U) = \sum_{u\in U} w_u$. 
Furthermore, we set  $W=w(V)$, $\wM= \max_{u\in V} w_u$ 
and $\wm= \min_{u\in V} w_u$.

\begin{definition}
A \emph{\ourgame} $\Gamma$ 
is a tuple $\langle  V, (w_u)_{u\in V}, \alpha,\beta \rangle$ where:
$V=\{1,\dots, n\}$ is the  set of players, for each player $u\in V$, $w_u> 0$ 
is the \emph{weight} of player $u$, 
$\alpha > 0$ is the cost of establishing a  link, and 
$\beta$, $1\leq \beta\leq n-1$,  is the \emph{critical distance}. 
 
A  \emph{strategy} for player $u$ in $\Gamma$ is a subset $S_u \subseteq V-\{u\}$,  
the set of players for which player $u$  pays  for establishing a direct link.  
A \emph{strategy profile} for $\Gamma$  is a tuple $S = (S_1,\ldots, S_n)$ 
that assigns a strategy to each player.
Every  strategy profile $S$ has associated  an \emph{outcome graph}, 
the undirected graph defined  by $G[S]= (V, \{\{u,v\}| u\in S_v \vee v\in S_u\})$. 

We denote by $c_u(S) = \alpha|S_u| + \sum_{\{v \mid d_{G[S]}(u,v)>\beta\}}w_v$
the \emph{cost} of player $u$ in the strategy profile  $S$.   
And, as usual, the \emph{social cost} of a strategy profile $S$  
in $\Gamma$ is defined as $C(S)=\sum_{u\in V} c_u(S)$.
\end{definition}

Observe that,  even though  a link might be established by only 
one of the two players, we assume that once a link is 
established it can be used in both directions.
Note also that players may have different  weights. 
The player's cost function has two components: 
the cost of establishing links and  the sum of the weights of 
those players who are farther away than the critical distance $\beta$.
In our model links have uniform length therefore w.l.o.g  $\beta$ is an integer. 
In what follows we assume that,
for a celebrity game $\Gamma=\langle  V, (w_u)_{u\in V}, \alpha,\beta \rangle$, 
the parameters verify the required conditions. 
Furthermore, unless specifically stated, we assume $\beta>1$, 
the case $\beta=1$ will be analyzed in Section~\ref{sec:beta1}.
We use the following notation $n=|V|$,  
$\mathcal{S}(u)$ is the set of strategies for player $u$ and 
$\mathcal{S}(\Gamma)$ is the set of strategy profiles of $\Gamma$.
For a strategy profile $S\in \mathcal{S}(\Gamma)$ and a 
strategy $S'_u\in \mathcal{S}(u)$, for player $u$, 
$(S_{-u},S'_u)$ represents the strategy profile in which $S_u$ is replaced by $S'_u$ 
while the strategies of the other players remain unchanged.  
The \emph{cost difference} $\Delta(S_{-u},S'_u)$ is defined as
$\Delta(S_{-u},S'_u)= c_u(S_{-u},S'_u) - c_u(S)$.  
Observe that, if  $\Delta(S_{-u},S'_u) <0$, then player $u$ 
has an incentive to deviate from $S_u$ and select $S'_u$. 
A \emph{best response} to $S\in \mathcal{S}(\Gamma)$ for player $u$ 
is a strategy  $S'_u\in \mathcal{S}(u)$ minimizing $\Delta(S_{-u},S'_u)$.

Let us recall the definition of Nash equilibrium.
\begin{definition}  
Let $\Gamma=\langle  V, (w_u)_{u\in V}, \alpha,\beta \rangle$ be a \ourgame. 
A strategy profile $S \in \mathcal{S}(\Gamma)$  is a \emph{Nash equilibrium} of 
$\Gamma$ if  no player  has an  incentive to deviate from his strategy. 
Formally, for each player $u$ and each strategy  $S_u'\in\mathcal{S}(u) $,   
$\Delta(S_{-u},S_u')  \geq 0$.
\end{definition}

We denote by $\NES(\Gamma)$ the set of Nash equilibria of a game $\Gamma$ 
and we use the term \NE to refer to a strategy profile $S\in \NES(\Gamma)$. 
We say that a graph $G$ is a \emph{\NE graph}  of $\Gamma$ 
if  there is $S\in \NES(\Gamma)$  so that $G=G[S]$.
We will drop the explicit reference to $\Gamma$ whenever $\Gamma$ 
is clear from the context.  
It is worth observing that, for $S\in \NES(\Gamma)$,  
it never happens that $v\in S_u$ and $u\in S_v$, for any $u,v\in V$. 
Thus, if $G$ is the outcome of a \NE  $S$,  
$S$ corresponds to an orientation of the edges in $G$. 
Furthermore, a \NE graph $G$ can be the outcome of several strategy 
profiles but not all the orientations of a \NE graph $G$  are \NE.

Let $\optv(\Gamma)= \min_{S \in \mathcal{S}(\Gamma)} C(S)$  
be the minimum value of the social cost. 
We use the term \OPT  strategy profile to refer to one strategy profile  with optimal social cost. 

Observe that, when in a strategy profile $S$, two players $u$ and $v$ 
are such that $u\in S_v$ and $v\in S_u$,  the social cost is higher  
than when only  one of them is paying for the connection $\{u,v\}$ 
and therefore, as for \NE, this does not happen in an \OPT strategy profile. 
In the following, as we are interested in \NE and \OPT strategies,  
among all the possible strategy profiles having the same outcome graph, 
we only consider those corresponding to orientations of the outcome graph.   
In this sense 
the social cost depends only on  the outcome graph, the weights and the parameters.  
Thus, we can express the social cost of a strategy profile  as a function 
of the outcome graph $G$ as follows
$$\displaystyle C(G)=\alpha |E(G)| + 
\sum_{u\in V}\sum_{\{v \mid d_G(u,v)>\beta\}} \hskip -12pt w_v = 
\alpha |E(G)| + \sum_{\{(u,v)\mid u<v  \text{ and }  d_G(u,v)>\beta\}}  \hskip -12pt (w_u+w_v).$$

We make use of three particular outcome graphs on $n$ vertexes: 
$K_n$, the complete graph; $I_n$, the independent set;  
and $ST_n$ the star graph, i.e., a tree  in which one of the vertexes, 
the \emph{central}  one,  has a direct link to all the other $n-1$ vertexes. 
For those graphs, we have the following values of the social cost.
For  $\Gamma=\langle  V, (w_u)_{u\in V}, \alpha,\beta \rangle$, 
with $|V|=n$,  $C(K_n) =\alpha n(n-1)/2$, $C(I_n) =  W(n-1)$, for $\beta\geq 1$.  Furthermore,  
$C(ST_n)= \alpha (n-1)$, for $1<\beta \leq n-1$,   
and  $C(ST_n)= \alpha (n-1) + (n-2) (W-w_c)$ where $c$ is the central vertex, for $\beta=1$. 

We define  the \PoA and the \PoS as usual.
\begin{definition} Let $\Gamma$ be a celebrity game. 
The \emph{Price of Anarchy} of~$\Gamma$ is defined 
as  $\PoA(\Gamma)= {\max_{S \in  \NES(\Gamma)} C(S)}/{\optv(\Gamma)}$ and
the \emph{Price of Stability} of~$\Gamma$ as
$\PoS(\Gamma)= {\min_{S \in \NES(\Gamma)} C(S)}/{\optv(\Gamma)}$.
\end{definition}

Our first result  shows that  computing  a best response 
in \ourgames is \NP-hard  by a reduction from the minimum dominating set problem. 
The problem becomes tractable for $\beta=1$ as we show in Section~\ref{sec:beta1}.
\begin{proposition}
\label{prop:Best-Response}   
Computing a best response for a player to a strategy profile  in  a  \ourgame 
is  \NP-hard, even when $\beta=2$ and restricted to the cases in which all players 
except possibly one have weights bigger than $\alpha$.
\end{proposition}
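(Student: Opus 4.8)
The plan is to reduce from the \textsc{Minimum Dominating Set} problem, which is \NP-hard: given a graph $H=(V_H,E_H)$ and an integer $k$, decide whether $H$ admits a dominating set of size at most $k$. The target decision version of best response is: given a \ourgame, a strategy profile $S$, a distinguished player $u$, and a bound $B$, decide whether $u$ has a strategy $S_u'$ with $c_u(S_{-u},S_u')\le B$. First I would build the game from $(H,k)$. Take $V=V_H\cup\{u\}$, assign every player the same weight $w$ (fixed below), pick any $\alpha>0$ with $w>\alpha$, and set $\beta=2$. For the fixed part $S_{-u}$, orient every edge of $H$ and charge it to one of its endpoints so that the outcome graph restricted to $V_H$ is exactly $H$ while $u$ remains isolated (no player buys a link to $u$). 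This construction is clearly polynomial.

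The heart of the argument exploits $\beta=2$. Since $u$ is isolated in $G_{-u}$, after $u$ buys a set $S_u'\subseteq V_H$ the vertices within distance $2$ of $u$ in the outcome graph are exactly the closed neighbourhood $N_H[S_u']=S_u'\cup\bigcup_{x\in S_u'}N_H(x)$: a vertex is at distance $1$ iff it belongs to $S_u'$, and at distance $2$ iff it is an $H$-neighbour of some vertex of $S_u'$. Consequently, because all weights equal $w$,
\[
c_u(S_{-u},S_u')=\alpha\,|S_u'|+w\,\bigl|\,V_H\setminus N_H[S_u']\,\bigr|,
\]
so the vertices that $u$ pays for are precisely those \emph{not} dominated by $S_u'$ in $H$.

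Next I would show that, since $w>\alpha$, every optimal response covers all of $V_H$: if some vertex $v$ were left uncovered, adding $v$ to $S_u'$ raises the link cost by $\alpha$ but removes the weight $w>\alpha$ of $v$, strictly lowering the total cost. Hence the minimum of $c_u(S_{-u},\cdot)$ is attained by a minimum-size dominating set of $H$ and equals $\alpha\,\gamma(H)$, where $\gamma(H)$ is the domination number. Choosing $B=\alpha k$, player $u$ admits a response of cost at most $\alpha k$ if and only if $\gamma(H)\le k$, which establishes the reduction and the \NP-hardness, with the promised restriction to equal weights and $\beta=2$.

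The step that needs the most care is the correspondence between ``within distance $\beta=2$ of $u$'' and ``dominated in $H$'', in particular verifying that no shortest path between two vertices of $V_H$ is shortened by routing through $u$; this is exactly why I insist that $u$ be isolated in $G_{-u}$ and reachable only through the links it buys. Once this is pinned down, the exchange argument forcing full coverage (which relies only on $w>\alpha$) and the final translation of the bound are routine.
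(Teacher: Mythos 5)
Your proof is correct and follows essentially the same route as the paper's: a reduction from \textsc{Minimum Dominating Set} with an extra isolated player $u$, $\beta=2$, equal weights strictly greater than $\alpha$ (the paper fixes $w_v=2$, $\alpha=1.5$), and the exchange argument showing any non-dominating response can be strictly improved, so best responses are exactly minimum dominating sets. The only cosmetic difference is that you phrase the target as a decision problem with a budget $B=\alpha k$, whereas the paper argues directly that computing the best response yields a minimum dominating set; both are valid formulations of the same reduction.
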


\begin{proof}

We provide a reduction from the problem of computing a dominating 
set of minimum size which is a classical \NP-hard problem. 
Recall that a dominating set of a graph $G=(V,E)$ is a set $U\subset V$ 
such that any vertex $u\in V$ is in $U$ or has a neighbor in $U$.

Let $G=(V,E)$ be a graph, we associate to $G$ and $u$ a \ourgame  
$\Gamma  =\langle  V', (w_v)_{v\in V'}, \alpha,\beta \rangle$, and a strategy profile $S$ as follows: 
\begin{itemize}
\item The set of players is $V'= V \cup \{u\}$, where $u$ is a new player (i.e. $u\not \in V$).
\item $\beta=2$, $\alpha =1.5$, 
\item  for every $v \in V$, $w_v = 2$.
\item The strategy profile $S$ is obtained from an orientation of the edges 
in $G$ setting $S_u=\emptyset$.
Observe that by construction $G[S]$ is the disjoint union of $G$ with the isolated vertex $u$.  
\end{itemize}
Finally, set $u$ to be the player for which we want to compute the best response to $S$. 
Observe that the weight of $u$ has not been defined yet.

Let $D \subseteq V$ be a strategy for player $u$. Notice that, if $D$ is a dominating set of $G$, then 
$c_u(S_{-u},D)= \alpha |D| + \sum_{x\in V, d(u,x)>2} 2 = \alpha |D|$.
If $D$ is not a dominating set of $G$,  $c_u(S_{-u},D)= \alpha |D| + \sum_{x\in V, d(u,x)>2} 2 > \alpha (|D| + |\{x\in V |d(u,x)>2\}|$. Then, $D \cup \{x\in V | d(u,x)>2\}$ is a 
better response than $D$ and furthermore it is a dominating set. 
Hence, the best response of player $u$ is a dominating set $D$ of $G$ of minimum size. 
To conclude the proof just  notice that the described reduction is polynomial 
time computable and that we did not make any assumption on the weight of the node $u$. 
\end{proof}
\section{Social Optimum and Nash equilibrium}
\label{sec:so-and-ne}
We analyze here  the main properties of \OPT and \NE  strategy 
profiles  in  \ourgames.
We start analyzing the cost of optimal graphs for the social cost. 
Then we characterize the family of  \ourgamens having a connected \NE graph. 
Finally, we provide  exact bounds on the \PoA and  the \PoS in some particular cases.

\begin{proposition}
\label{prop:OptCost}   
Let  $\Gamma= \langle  V, (w_u)_{u\in V}, \alpha,\beta \rangle$ be a \ourgame. 
We have that  $\optv(\Gamma) = \min \{\alpha, W\} (n-1)$.
\end{proposition}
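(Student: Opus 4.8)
The plan is to establish the claimed equality by proving the two matching inequalities $\optv(\Gamma) \le \min\{\alpha, W\}(n-1)$ and $\optv(\Gamma) \ge \min\{\alpha, W\}(n-1)$. The upper bound is immediate from the social cost values already recorded in the excerpt: for $\beta > 1$ the star has diameter $2$ and pays no penalty, so $C(S_n) = \alpha(n-1)$, while the independent set gives $C(I_n) = W(n-1)$. Since both $S_n$ and $I_n$ are admissible outcome graphs, $\optv(\Gamma) \le \min\{C(S_n), C(I_n)\} = \min\{\alpha, W\}(n-1)$.

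The substance of the proof is the lower bound: every outcome graph $G$ satisfies $C(G) \ge \min\{\alpha, W\}(n-1)$. First I would decompose $G$ into its connected components $C_1, \ldots, C_k$, writing $n_i = |C_i|$ and $W_i = w(C_i)$, so that $\sum_i n_i = n$ and $\sum_i W_i = W$. Two structural facts drive the argument. For the edge term, a graph with $k$ components has at least $n-k$ edges, hence $\alpha |E(G)| \ge \alpha(n-k)$. For the penalty term, I discard the within-component contributions (which are nonnegative) and keep only the cross-component pairs, which sit at infinite distance and are therefore always penalized; a short rearrangement gives the cross-component penalty $\sum_{i<j}(n_j W_i + n_i W_j) = \sum_i W_i(n - n_i)$.

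It then remains to show $\alpha(n-k) + \sum_i W_i(n-n_i) \ge \min\{\alpha, W\}(n-1)$. Writing $\mu = \min\{\alpha, W\}$, the edge part yields $\alpha(n-k) \ge \mu(n-k)$ at once. For the penalty part I would rewrite $\sum_i W_i(n-n_i) = nW - \sum_i W_i n_i$ and bound $\sum_i W_i n_i \le (n-k+1)W$, using that every component has at most $n-(k-1)$ vertices because the remaining $k-1$ components are nonempty; this gives $\sum_i W_i(n-n_i) \ge (k-1)W \ge (k-1)\mu$. Adding the two estimates produces $C(G) \ge \mu(n-k) + \mu(k-1) = \mu(n-1)$, as required.

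The only delicate point, and the step I expect to be the main obstacle, is the penalty estimate: one must resist bounding the penalty crudely (for instance by $W$) and instead exploit the size constraint $n_i \le n-k+1$ to get $\sum_i W_i n_i \le (n-k+1)W$, which is exactly what makes the component count $k$ cancel against the edge term. Everything else is routine, and the two regimes $\alpha \le W$ (optimum attained by the star) and $W \le \alpha$ (optimum attained by the independent set) are handled uniformly by this single computation.
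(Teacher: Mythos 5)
Your proof is correct, and its core algebra coincides with the paper's: both decompose the graph into connected components and both hinge on the same counting inequality $\sum_i n_i W_i \le (n-k+1)W$. The routes to the lower bound, however, genuinely differ. The paper reasons about the structure of an optimal graph: it first normalizes each connected component of an \OPT graph to a star (arguing that replacing a component by a spanning star of diameter at most $\beta$ cannot increase the cost), computes the cost of the resulting disjoint union of stars exactly, and finishes with a case split ($\alpha \ge W$ versus $\alpha < W$) comparing against $C(I_n)$ and $C(S_n)$ respectively. You instead prove the unconditional bound $C(G) \ge \min\{\alpha,W\}(n-1)$ for \emph{every} outcome graph, lower-bounding the edge term by $\alpha(n-k)$ via $|E|\ge n-k$ and the penalty term by the cross-component contribution $\sum_i W_i(n-n_i)$, and treating both regimes uniformly through $\mu=\min\{\alpha,W\}$. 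Your route avoids the exchange/WLOG normalization step (whose justification the paper only sketches) and the case analysis, so it is more self-contained and robust; the paper's normalization buys extra structural information---an optimum is always attained by a disjoint union of stars, in particular by $S_n$ when $\alpha<W$ and by $I_n$ when $\alpha\ge W$---which is exactly what the paper leans on later when computing the price of stability.
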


\begin{proof}
Let $S\in \OPT(\Gamma)$, and let $G=G[S]$ with connected components
$G_1,...,G_r$, $V_i=V(G_i)$, $k_i = |V_i|$, 
and $W_i = w(V_i)$, for $1\leq i\leq r$.
Observe that the social cost of a disconnected graph can be expressed as the 
sum of the social cost of its connected components.  
Each connected component must be a tree of diameter at most $\beta$, 
otherwise a strategy profile  with smaller social cost could be 
obtained by replacing the connections on $V_i$ by such a tree. 
We can assume w.l.o.g. that, for $1\leq i\leq r$, the $i$-th connected 
component is a star graph $ST_{k_i}$ of $k_i$ vertexes.   
Since $C(ST_k) = \alpha (k-1)$ we have that
$$\displaystyle C(G)  =  \sum_{i=1}^r \alpha(k_i-1) + \sum_{i=1}^r k_i(W-W_i) 
=  \alpha(n-r)+  n W - \sum_{i=1}^r k_i W_i. $$
As $1 \leq k_i \leq  n -(r-1)$, we have  $W \leq \sum_{i=1}^r k_i W_i \leq (n-r+1)W$. 
Therefore,  $\alpha(n-r) + (r-1) W  \leq C(G)$. We consider two cases.

\smallskip
\noindent
\emph{Case 1:  $\alpha \geq W$}.  
We have   $W(n-1) \leq C(G)$. 
Since $C(I_n)= W(n-1) \leq C(G)$ 
and $G$ is an optimal graph, then $C(G)=W(n-1)$.

\smallskip
\noindent
\emph{Case 2: $\alpha < W$}. 
Now    $\alpha (n-1) \leq C(G)$.
As   $C(ST_n)= \alpha (n-1) \leq  C(G)$,   
the optimal graph $G$ has a social cost $C(G)=\alpha(n-1)$.
We conclude that    $\OPT=\min\{\alpha,W\}(n-1)$.
\end{proof}
Now we turn our attention to  the study of the \NE graph 
topologies showing that any \NE graph is either an independent 
set or a connected graph. 
\begin{proposition}
\label{prop:NEConnectedComponent}
Let  $\Gamma= \langle  V, (w_u)_{u\in V}, \alpha,\beta \rangle$ be a \ourgame. 
Every \NE graph of $\Gamma$ is either  connected or  the graph $I_n$, where $n=|V|$.
\end{proposition}

\begin{proof}
If $n\leq2$ the proposition follows immediately.
When  $n>2$, let us suppose that there  is a  \NE  
$S$ such that the graph $G=G[S]$ is 
not connected and different from $I_n$.  
In this case $G$ is composed of at least two 
different  connected components $G_1$ and $G_2$.  
Furthermore,  as $G\neq I_n$, we can assume that  $|V(G_1)| > 1$ 
as at least one of the connected components contains 
two vertexes connected by an edge.  
Let  $u\in V(G_1)$ be such that $S_u\neq \emptyset$. 
Let $x\in S_u$ and $v\in V(G_2)$. 
Let us consider the strategies $S_u'=S_u\setminus \{x\}$ and $S_v'=S_v \cup \{x\}$.
As $S$ is a \NE we know that $\Delta(S_{-u},S'_{u}) \geq 0$.
Let $G'= G[S_{-v},S_v']$, observe that  $d_{G'}(v,u)=2 \leq \beta$, 
therefore    $\Delta(S_{-v},S'_{v})  \leq  - \Delta(S_{-u},S'_{u}) -w_u <  0$.
This contradicts the hypothesis that $S$ is a \NE.
\end{proof}

Next we study the conditions under which  particular topologies are \NE graphs. 
Those results prove  that  celebrity games always have a \NE.
\begin{proposition}
\label{prop:NEStarIn}
Every \ourgame $\Gamma=\langle  V, (w_u)_{u\in V}, \alpha,\beta \rangle$ has a \NE. Furthermore,
when  $\alpha \geq w_{max}$, $I_n$ is a \NE graph, 
otherwise  $ST_n$  is a \NE graph but $I_n$ is not, where $n=|V|$.
\end{proposition}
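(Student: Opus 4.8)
The statement bundles three claims: (i) every \ourgame has a \NE; (ii) if $\alpha \geq \wM$ then the empty graph $I_n$ is a \NE graph; and (iii) if $\alpha < \wM$ then the star $S_n$ is a \NE graph while $I_n$ is not. Since exactly one of the regimes $\alpha\geq\wM$ or $\alpha<\wM$ always holds, claim (i) follows from (ii) and (iii), so the plan is to exhibit, in each regime, an explicit strategy profile (an orientation of the claimed outcome graph) and to verify the \NE inequality $\Delta(S_{-u},S'_u)\geq 0$ for every player $u$ and every alternative strategy $S'_u$. Throughout I use that, since $\beta>1$ and $diam(S_n)=2$, nobody pays a weight penalty inside the star, so only link costs matter there.

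For regime (ii) I take $S$ with $S_u=\emptyset$ for all $u$, so $G[S]=I_n$. If player $u$ deviates to $S'_u$, then because every other player buys nothing, the only edges incident to $u$ in $G[S_{-u},S'_u]$ run from $u$ to $S'_u$; hence the vertices within distance $\beta$ of $u$ are exactly those of $S'_u$ (at distance $1$), while all remaining vertices stay at distance $\infty$. A direct computation then gives $\Delta(S_{-u},S'_u)=\alpha|S'_u|-w(S'_u)=\sum_{v\in S'_u}(\alpha-w_v)$, which is non-negative because $\alpha\geq\wM\geq w_v$ for every $v$. Thus $I_n$ is a \NE graph.

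For regime (iii), that $I_n$ is \emph{not} a \NE is immediate: letting $m$ be a vertex of maximum weight, any other player can buy the single link to $m$, changing its cost by $\alpha-\wM<0$, a profitable deviation. To show $S_n$ is a \NE graph I will pick the orientation in which the \emph{maximum-weight} vertex $c$ is the centre and every leaf buys its own link to $c$ (so $S_v=\{c\}$ for each leaf and $S_c=\emptyset$). Then $c_c=0$, which is the minimum possible cost, so $c$ never deviates, while each leaf pays exactly $\alpha$. For a leaf $v$ I would bound its best response: buying no link isolates $v$ at cost $W-w_v\geq w_c=\wM>\alpha$; buying two or more links costs at least $2\alpha>\alpha$; and buying a single link costs at least $\alpha$, with equality when that link goes to $c$ (and, if $\beta\geq 3$, also when it goes to another leaf). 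In every case $\Delta(S_{-v},S'_v)\geq 0$, so $S_n$ is a \NE graph.

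The only genuinely delicate point is the choice of centre and orientation for the star. The ``centre buys everything'' orientation fails as soon as some leaf has weight below $\alpha$, since the centre would then profitably drop that link; and placing a non-maximal vertex at the centre can make the disconnection deviation profitable for the heaviest leaf, because $W-w_v<\alpha$ becomes possible. Putting the heaviest vertex at the centre and letting the leaves pay simultaneously forces $c_c=0$ and guarantees $W-w_v\geq\wM>\alpha$ for every leaf, which is exactly what makes the case analysis close; the remaining leaf deviations are routine, the only subtlety being the harmless split between $\beta=2$ and $\beta\geq 3$ for a link redirected to another leaf.
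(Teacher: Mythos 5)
Your proof is correct and takes essentially the same route as the paper: the same strategy profiles (all-empty for $I_n$ when $\alpha\geq\wM$; leaves buying a single link to a maximum-weight centre for $S_n$ when $\alpha<\wM$) and the same deviation analysis, with your case split on $|S'_v|$ being a somewhat more explicit write-up of the paper's terser argument. No gaps.
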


\begin{proof}
When $\alpha \geq \wM$ let us show that $I_n$ is a \NE graph.  
Observe that $G=I_n$ is the outcome of a unique strategy profile $S$ in which 
$S_u=\emptyset$, for any $u\in V$. 
Let us consider a player $u$ and a strategy $S'_u\not = \emptyset$. 
The cost difference of player $u$ is then 
$\Delta(S_{-u},S_u')  = \alpha |S_u'| - \sum_{v\in S_u'} w_v= \sum_{v\in S_u'}(\alpha -w_v) \geq 0$. 
Therefore player $u$ has no incentive to deviate from  $S_u$ and $I_n$ is a \NE  graph.

When  $\alpha < \wM$, let $u$ be a vertex with $w_u=\wM$ and 
let  $ST_n$ be a star graph with $n$ vertexes in which the center 
is $u$, let us show that $ST_n$ is a \NE graph.
Consider the strategy profile $S$ in which $S_{u}=\emptyset$ and $S_v=\{u\}$, 
for any $v\in V$ different from $u$. 
Observe that the center $u$ is a vertex with maximum weight.   
As $\beta >1$  no player will get a cost decrease by connecting to more players. 
Furthermore, for $u\neq v$, $w_v+\alpha < w_v + \wM < W$. 
Thus $\alpha < W-w_v$  and  $v$ will not get any benefit by deleting the actual connection. 
The only remaining possibility is to reconnect to another vertex, 
but in such a case the cost cannot decrease. 
Therefore, $ST_n$ is a \NE graph.  
Notice that in this case $I_n$ can not be a \NE,  
as every player $u$ has incentive to connect with any other player $v$ such that $w_v=\wM$.
\end{proof}

To conclude  the study of \NE we characterize the \ourgames  
where  $I_n$ is the unique \NE graph.  
The negated condition characterizes those games  in which $ST_n$ is a \NE. 

\begin{proposition}
\label{pro:nash-disc}
Let  $\Gamma=\langle  V, (w_u)_{u\in V}, \alpha,\beta \rangle$ 
be a \ourgame on $n$ players with   $\alpha \geq \wM$.
If there is more than one vertex $u\in V$ with $\alpha >W-w_u$,  
then $I_n$ is the unique \NE graph  of $\Gamma$, 
otherwise $ST_n$ is a \NE graph of $\Gamma$.
\end{proposition}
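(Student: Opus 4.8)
The plan is to reduce both claims to the behaviour of connected \NE graphs. Since $\alpha \geq \wM$, Proposition~\ref{prop:NEStarIn} already guarantees that $I_n$ is a \NE graph, and Proposition~\ref{prop:NEConnectedComponent} tells us that every other \NE graph must be connected. The crucial preliminary observation I would establish is that a \emph{special} vertex $u$, meaning one with $\alpha > W - w_u$, never buys an edge in any \NE: deviating to the empty strategy saves at least $\alpha$ in link costs while raising its penalty by at most $W - w_u < \alpha$ (the total weight of all other vertices), so any profile with $S_u \neq \emptyset$ would be improvable.

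For the uniqueness claim I would argue that, when two special vertices $u_1,u_2$ exist, no connected \NE graph $G$ can occur, so $I_n$ is the only \NE graph. Both $u_1$ and $u_2$ buy nothing, hence they cannot be adjacent (an edge between them would have no buyer) and $m = d_G(u_1,u_2) \geq 2$. I would fix a shortest $u_1$--$u_2$ path and let $v'$ be the neighbour of $u_2$ on it; since $u_2$ buys nothing, $v'$ is the buyer of $e' = \{v',u_2\}$. The \NE condition for $v'$ forces the penalty it saves by owning $e'$ to be at least $\alpha$. The key point is that $u_1$ cannot contribute to this saving: the prefix of the shortest path certifies $d_{G - e'}(v',u_1) \leq m-1 = d_G(v',u_1)$, so deleting $e'$ never pushes $u_1$ beyond the critical distance, and $w_{u_1}$ is excluded. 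This caps the saving at $W - w_{v'} - w_{u_1} \leq W - w_{u_1} < \alpha$, contradicting the \NE condition.

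For the remaining case (at most one special vertex) I would exhibit an explicit star \NE. If a unique special vertex $u^{*}$ exists, every other vertex $v$ satisfies $w_v \leq W - \alpha < w_{u^{*}}$, so $u^{*}$ is the strict maximum-weight vertex, and I take it as the centre (if no special vertex exists, any maximum-weight vertex works). I orient $S_n$ so that every leaf buys its edge to the centre. The centre then has cost $0$ and cannot improve; each leaf is non-special, so $\alpha \leq W - w_v$ means that dropping its single edge, which isolates it and incurs penalty $W - w_v$, is not profitable, while adding edges or rewiring to a non-centre only raises its cost above $\alpha$. Hence $S_n$ is a \NE graph.

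I expect the uniqueness argument to be the main obstacle, because bounding the per-edge penalty saving in an arbitrary connected graph is delicate: an ill-chosen edge might genuinely have $u_1$ on its far side. Choosing $v'$ on a shortest $u_1$--$u_2$ path is exactly what makes $u_1$'s distance robust to the deletion of $e'$, and thus guarantees its exclusion from the saving. I would dispatch the degenerate case $n=2$ and the non-adjacency of $u_1,u_2$ as easy separate observations.
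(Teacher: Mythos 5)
Your proof is correct, and the uniqueness part takes a genuinely different route from the paper's. Both arguments begin with the same preliminary fact (a vertex $u$ with $\alpha > W-w_u$ buys no edge in any \NE, since emptying its strategy saves at least $\alpha$ and costs at most $W-w_u < \alpha$), both invoke Proposition~\ref{prop:NEConnectedComponent} to reduce to connected graphs, and both verify the star equilibrium the same way (the paper merely asserts it is ``clearly'' a \NE; your explicit check of the leaves' deviations is what that assertion hides). The difference is how a connected \NE is refuted when two special vertices exist. The paper counts: a connected graph has at least $n-1$ edges, none bought by the two special vertices, so by pigeonhole some $z$ has $|S_z|\geq 2$; dropping both edges yields $W-w_z \geq 2\alpha > (W-w_u)+(W-w_v)$, hence $W < w_u+w_v-w_z$, which is impossible. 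You instead localize the contradiction at a single edge: on a shortest $u_1$--$u_2$ path, the edge $e'$ incident to $u_2$ must be bought by its other endpoint $v'$, and since the path prefix survives the deletion of $e'$, the distance $d(v',u_1)$ is unchanged, so $u_1$ cannot contribute to the penalty $v'$ saves by owning $e'$; that saving is thus at most $W - w_{v'} - w_{u_1} < \alpha$, contradicting the \NE condition on $v'$. The paper's argument is shorter and purely arithmetic; yours uses only single-edge deviations, pinpoints exactly which edge is unsustainable, and handles $n=2$ for free via the non-adjacency of the two special vertices, at the price of the shortest-path bookkeeping. Both are sound.
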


\begin{proof}
Assume that,  for two vertexes $u\neq v$,  $\alpha > W-w_u$ and  $\alpha >W-w_v$, 
and that there exists a \NE graph  $G=G[S]$ different from $I_n$.  
By Proposition~\ref{prop:NEConnectedComponent}, $G$ is connected. 
Therefore, it has at least $n-1$ edges. 
Since, $\alpha > W-w_u$ and   $\alpha >W-w_v$, we have that $S_u = S_v = \emptyset$, 
otherwise $S$ would not be a \NE. 
Therefore, there must be a vertex, $z \neq u,v$ such that $|S_z| \geq 2$. 
Let $x,y\in S_z$ and consider the strategy  $S_z' = S_z \setminus \left\{x,y\right\}$. 
Then, $\Delta(S_{-z},S_z') \leq -2\alpha + W - w_z$. 
As $G$ is a \NE graph and we have that $2 \alpha > W-w_u + W-w_v$, we conclude  that
$W-w_z \geq 2\alpha >W-w_u+W-w_v$.  
Hence, $W<w_u+w_v-w_z < w_u+w_v$, which is impossible. 
In the case that there is at most one vertex $u$ with $\alpha > W-w_u$, 
 the strategy profile $S$, where  $S_u = \emptyset$, and  
$S_v = \left\{ u\right\}$, for all $v\neq u$,  is  a \NE. Furthermore $G[S]=ST_n$. 
\end{proof}
\begin{corollary}\label{cor:nash-disc}
Let  $\Gamma= \langle  V, (w_u)_{u\in V}, \alpha,\beta \rangle$ be a \ourgame on $n$ players. 
$I_n$ is the unique \NE graph  of $\Gamma$ if and only if 
$\alpha \geq \wM$ and there is more than one vertex $u\in V$ 
such that  $\alpha >W-w_u$.
\end{corollary}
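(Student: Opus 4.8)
The plan is to derive both implications directly from Propositions~\ref{prop:NEStarIn} and~\ref{pro:nash-disc}, since the corollary merely repackages them as a clean biconditional characterization.

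For the ``if'' direction, I would assume $\alpha \geq \wM$ and that there are more than one vertex $u \in V$ with $\alpha > W - w_u$. The condition $\alpha \geq \wM$ is precisely the hypothesis under which Proposition~\ref{prop:NEStarIn} guarantees that $I_n$ is a \NE graph, so $I_n$ is at least one \NE graph. Uniqueness is then exactly the first statement of Proposition~\ref{pro:nash-disc}: with $\alpha \geq \wM$ and at least two vertices satisfying $\alpha > W - w_u$, $I_n$ is the unique \NE graph. Hence the two hypotheses together yield that $I_n$ is the unique \NE graph.

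For the ``only if'' direction, I would assume $I_n$ is the unique \NE graph and establish the two conditions separately. First, since $I_n$ is in particular a \NE graph, the contrapositive of the last clause of Proposition~\ref{prop:NEStarIn} (``when $\alpha < \wM$, $S_n$ is a \NE graph but $I_n$ is not'') forces $\alpha \geq \wM$. Second, suppose for contradiction that at most one vertex $u$ satisfies $\alpha > W - w_u$. Then the second statement of Proposition~\ref{pro:nash-disc} provides a star graph $S_n$ that is a \NE graph. Since $n \geq 3$ in the regime $\beta > 1$, the star $S_n$ has $n-1 \geq 2$ edges whereas $I_n$ has none, so $S_n \neq I_n$; this second \NE graph contradicts the assumed uniqueness of $I_n$. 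Therefore there must be more than one vertex with $\alpha > W - w_u$.

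I expect no genuine obstacle here, as the corollary is an immediate consequence of the two preceding propositions. The only point requiring a line of care is observing that the star $S_n$ produced in Proposition~\ref{pro:nash-disc} is genuinely distinct from $I_n$ (which holds because $n \geq 3$ whenever $\beta > 1$), so that its mere existence truly violates uniqueness rather than reconfirming that $I_n$ is a \NE graph.
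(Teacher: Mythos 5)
Your proof is correct and follows essentially the same route as the paper, which states the corollary without a separate proof precisely because it is the direct combination of Propositions~\ref{prop:NEStarIn} (existence of $I_n$ as a \NE graph when $\alpha \geq \wM$, and non-equilibrium of $I_n$ when $\alpha < \wM$) and~\ref{pro:nash-disc} (uniqueness, resp.\ the competing star equilibrium). Your added observation that $S_n \neq I_n$ because $\beta > 1$ forces $n \geq 3$ is a valid and careful touch, though $n \geq 2$ already suffices since the star has at least one edge.
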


Observe that in our model it is preferable to be an isolated node 
than to pay a huge amount for establishing a link. 
In fact, in a \NE graph either all nodes are isolated, or the graph is connected. 
Hence,  selecting an appropriate price per link   is a key fact  to guarantee the 
connectivity of the equilibrium graphs.

Finally, using this characterization, we can formally define  the subfamily of 
\ourgames that have always a connected \NE graph. 
Those games  have $ST_n$ as a \NE graph.
\begin{definition}
$\Gamma=\langle  V, (w_u)_{u\in V}, \alpha,\beta \rangle$ is 
a \stargame if $\Gamma$ has a \NE graph that  is connected.
\end{definition}
\begin{corollary}
\label{cor:NEStar}
For a  \ourgame 
$\Gamma=\langle  V, (w_u)_{u\in V}, \alpha,\beta \rangle$, 
the following  statements are equivalent.
\begin{itemize} 
\item $\Gamma$ is a  \stargame.
\item Either $\alpha < \wM$ or $\alpha \geq \wM$ and there is 
at most one $u\in V$ for which $\alpha > W-w_u$.
\item  $ST_n$ is a \NE graph of $\Gamma$.
\end{itemize}
\end{corollary}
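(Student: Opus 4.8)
The plan is to establish the three equivalences through the cyclic chain $(3) \Rightarrow (1) \Rightarrow (2) \Rightarrow (3)$, relying entirely on the structural results already proved: Proposition~\ref{prop:NEConnectedComponent} (every \NE graph is connected or equals $I_n$), Propositions~\ref{prop:NEStarIn} and~\ref{pro:nash-disc} (describing when $S_n$ is a \NE graph), and Corollary~\ref{cor:nash-disc} (describing when $I_n$ is the unique \NE graph). Throughout I would assume $n \geq 2$, the case $n = 1$ being trivial since then $S_1 = I_1$ is a single connected vertex and all three conditions hold.

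The implication $(3) \Rightarrow (1)$ is immediate: the star $S_n$ is connected, so if it is a \NE graph then $\Gamma$ possesses a connected \NE graph and is, by definition, a \stargame. For $(1) \Rightarrow (2)$, suppose $\Gamma$ is a \stargame, so it has a connected \NE graph $G$. Since $n \geq 2$, the graph $I_n$ is disconnected, hence $G \neq I_n$, and $I_n$ is therefore not the unique \NE graph of $\Gamma$. By the characterization in Corollary~\ref{cor:nash-disc}, the condition ``$\alpha \geq \wM$ and there is more than one vertex $u$ with $\alpha > W - w_u$'' must then fail; its negation is ``$\alpha < \wM$, or there is at most one vertex $u$ with $\alpha > W - w_u$,'' which coincides with condition~(2).

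Finally, for $(2) \Rightarrow (3)$ I would split on the defining disjunction of~(2). If $\alpha < \wM$, then Proposition~\ref{prop:NEStarIn} directly yields that $S_n$ is a \NE graph. If instead $\alpha \geq \wM$ and there is at most one vertex $u$ with $\alpha > W - w_u$, then the ``otherwise'' branch of Proposition~\ref{pro:nash-disc} yields that $S_n$ is a \NE graph. Either way (3) holds, closing the cycle.

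I do not expect a genuine obstacle, as the statement is essentially a repackaging of the earlier results. The one point requiring care is the logical bookkeeping in $(1) \Rightarrow (2)$: one must check that the negation of the Corollary~\ref{cor:nash-disc} condition really coincides with the stated form of~(2). Writing $A$ for ``$\alpha < \wM$'' and $B$ for ``at most one vertex $u$ with $\alpha > W - w_u$,'' the Corollary condition is $\neg A \wedge \neg B$, whose negation is $A \vee B$, whereas (2) reads $A \vee (\neg A \wedge B)$; the absorption identity $A \vee (\neg A \wedge B) = A \vee B$ shows these agree, which is the only nontrivial step.
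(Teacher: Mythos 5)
Your proof is correct and matches what the paper intends: the corollary is stated there without an explicit proof, as an immediate assembly of Proposition~\ref{prop:NEStarIn}, Proposition~\ref{pro:nash-disc}, and Corollary~\ref{cor:nash-disc}, which is exactly the cyclic chain $(3)\Rightarrow(1)\Rightarrow(2)\Rightarrow(3)$ you give. Your explicit check that the negation of the condition in Corollary~\ref{cor:nash-disc} coincides with condition~(2) via the absorption identity is the only point of substance, and you handle it correctly.
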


Putting all together we can compute the  \PoS and, in some cases,  the \PoA.

\begin{theorem}
\label{theo:PoS} 
Let  $\Gamma$ be a  \ourgame. Then we have.
\begin{itemize} 
\item If $\Gamma$ is a \stargame, $\PoS(\Gamma)= 1$.
\item If $\Gamma$ is not a \stargame and   $\alpha\geq W$, then
$\PoS(\Gamma) = \PoA(\Gamma) = 1$.
\item If $\Gamma$ is not a \stargame and   $\alpha < W$, then
$\PoS(\Gamma) =   \PoA(\Gamma) = W/\alpha >1$. 
\end{itemize}
\end{theorem}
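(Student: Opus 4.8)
The plan is to reduce all three cases to the optimal-cost formula $\optv(\Gamma)=\min\{\alpha,W\}(n-1)$ of Proposition~\ref{prop:OptCost}, combined with the exact costs $C(S_n)=\alpha(n-1)$ and $C(I_n)=W(n-1)$ recorded earlier, and to use the characterizations of Corollaries~\ref{cor:nash-disc} and~\ref{cor:NEStar} to decide which graphs are \NE. Since this section assumes $\beta>1$, the value $C(S_n)=\alpha(n-1)$ is the one we are entitled to use, and the constraint $1\le\beta\le n-1$ forces $n\ge2$.

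For the \stargame case, the observation I would establish first is that every \stargame satisfies $\alpha\le W$. Indeed, by part (2) of Corollary~\ref{cor:NEStar} either $\alpha<\wM\le W$, or $\alpha\ge\wM$ together with \emph{at most one} vertex $u$ obeying $\alpha>W-w_u$. In the latter situation, if $\alpha>W$ held, then $\alpha>W>W-w_u$ would hold for every one of the $n\ge2$ vertices, contradicting the ``at most one'' clause; hence $\alpha\le W$ in all cases. Consequently $\min\{\alpha,W\}=\alpha$ and $\optv(\Gamma)=\alpha(n-1)=C(S_n)$. Since part (3) of Corollary~\ref{cor:NEStar} guarantees that $S_n$ is a \NE graph of $\Gamma$, this optimal value is attained by an equilibrium, so $\PoS(\Gamma)=1$.

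For the two remaining cases I would use that ``$\Gamma$ is not a \stargame'' is exactly the negation of condition (2) of Corollary~\ref{cor:NEStar}, namely $\alpha\ge\wM$ and more than one vertex $u$ with $\alpha>W-w_u$, which is precisely the hypothesis of Corollary~\ref{cor:nash-disc}. Therefore $I_n$ is the \emph{unique} \NE graph of $\Gamma$, so the maximum and the minimum social cost over $\NES(\Gamma)$ both equal $C(I_n)=W(n-1)$, giving $\PoS(\Gamma)=\PoA(\Gamma)=W(n-1)/\optv(\Gamma)$. Substituting Proposition~\ref{prop:OptCost} closes both cases: if $\alpha\ge W$ then $\optv(\Gamma)=W(n-1)$ and the ratio is $1$, while if $\alpha<W$ then $\optv(\Gamma)=\alpha(n-1)$ and the ratio is $W/\alpha$, which is strictly larger than $1$ because $W>\alpha$.

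Essentially everything here is bookkeeping against previously proved statements, so the only genuinely non-routine point I anticipate is the inequality $\alpha\le W$ for \stargames: it is the single place where the combinatorial characterization (2) must be converted into a numerical bound instead of merely quoted. Once that step is secured, identifying the unique (respectively optimal) \NE graph and carrying out the arithmetic with the cost formulas is immediate.
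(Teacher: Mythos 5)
Your proposal is correct and follows essentially the same route as the paper's proof: reduce to $\optv(\Gamma)=\min\{\alpha,W\}(n-1)$, use Corollary~\ref{cor:NEStar} to get $S_n$ as a \NE and to bound $\alpha$ against $W$ in the \stargame case, and use Corollary~\ref{cor:nash-disc} to identify $I_n$ as the unique \NE graph otherwise. The only cosmetic difference is that the paper derives the strict bound $\alpha<W$ by ordering the weights, while you obtain $\alpha\le W$ by contradiction from the ``at most one'' clause; as you note, the weaker inequality already gives $\optv(\Gamma)=\alpha(n-1)=C(S_n)$, so your argument closes the case equally well.
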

\begin{proof}

From Proposition~\ref{prop:OptCost}, we have that  
$\optv(\Gamma) = W(n-1)$ if $\alpha\geq W$ and  
$\optv(\Gamma) = \alpha (n-1)$, otherwise. 
When $\Gamma$ is a \stargame, by Corollary ~\ref{cor:NEStar} 
we know that  $ST_n$ is  a \NE graph. Let us see that in \stargames 
it can only occur that $\alpha < W$.
If $\alpha < \wM$, clearly  $\alpha < W$. If $\alpha \geq \wM$,  
by Corollary~\ref{cor:NEStar} there is at most one $u\in V$ for which $\alpha > W-w_u$. 
Assuming that  $w_{u_1}\leq \ldots \leq w_{u_{n-1}} \leq w_{u_n}$, we have that 
$W>W-w_{u_1} \geq \ldots \geq W-w_{u_{n-1}} \geq W-w_{u_n}$, 
and then $W-w_{u_{n-1}} \geq \alpha$.
Hence, $\PoS(\Gamma)= 1$.

When $\Gamma$ is not a \stargame,  $I_n$ is the unique \NE graph. 
Thus, when $\alpha \geq W$ we have, 
$\PoS(\Gamma) = \PoA(\Gamma) = 1$ and,   
when $\alpha < W$ we have, $\PoS(\Gamma) = \PoA(\Gamma) =  W/\alpha >1$.
\end{proof}

\section{Critical distance and diameter in  Nash equilibrium graphs}
\label{sec:ne-diameter}

In this section we analyze the diameter of \NE graphs and  
its relationship with the parameters defining the game.  
We are interested only in games in which  \NE graphs with 
finite diameter exist, thus we only consider   \ourgamens. 
In stating the characterization, nodes with a high weight with 
respect to the link cost play a fundamental role and it is worth to give them a name.  

\begin{definition}
Let $\Gamma=\langle  V, (w_u)_{u\in V}, \alpha,\beta \rangle$ be a \ourgame.
We say that a vertex $u \in V$  is a  \emph{celebrity} if  $\alpha <w_u$.
\end{definition}

Given a  celebrity $u$, any other node $v$ with $d(u,v) > \beta$  
has an incentive to pay for connecting to $u$. 
Thus, in any  \NE graph $G$,  every celebrity node $u$  
satisfies that $diam(u)\leq \beta$. 

In some of the proofs of the following results we refer to a   
set of \emph{critical nodes} $z\in V$ of a graph $G=(V,E)$  
with respect to a  node $u$ and an edge $\{x,y\}$. 
Critical is used  in the sense that as all the shortest paths from $u$ to $z$ use $\{x,y\}$, 
removing the edge $\{x,y\}$ results in an increase of the distance from $u$ to $z$.  
We use the notation
$$A^G_{\{x,y\}}(u)=\{z \in V | \text{ all the shortest paths in $G$ from $u$ to $z$ use the edge }  \{x,y\}\}$$  
We drop the explicit reference to $G$ whenever $G$ 
is clear from the context.

\begin{proposition}\label{DiameterGeneral}
Let $\Gamma= \langle  V, (w_u)_{u\in V}, \alpha,\beta \rangle$
be a \stargame. If $G$ is  a \NE graph of $\Gamma$, then  $diam(G)\leq 2\beta + 1$.
\end{proposition} 

\begin{proof}
Let $S$ be a \NE  of $\Gamma$ such that  $G=G[S]$. Assume that $diam(G) \geq  2\beta + 2$. 
Then, there are two nodes $u,v\in V$ such that $d(u,v) = 2\beta + 2$. 
Consider a shortest path from $u$ to $v$,  $u=u_0,u_1,\dots, u_{2\beta +1}, u_{2\beta +2}= v$. 

Let $A_u=\{x \in V | d(u,x) \leq \beta\}$ and let $A_{u_1}=\{x \in V | d(u_1,x) \leq \beta \}$. 
Let us show  that if a node  $x\in A_u\cup A_{u_1}$, then $d(x,v)>\beta$. If $x\in A_u$ then $d(x,v) > \beta$, otherwise $d(u,v)\leq d(u,x) + d(x,v) \leq 2 \beta$ contradicting the fact that $d(u,v) = 2\beta + 2$.
Moreover, if  $x\in A_{u_1}$ then $d(x,v) > \beta$, otherwise $d(u,v)\leq 1+ d(u_1,x) + d(x,v) \leq 2 \beta +1$ which also contradicts the fact that $d(u,v) = 2\beta + 2$.

Consider the edge $\{u,u_1\}$. Then, either $u_1\in S_u$ or $u\in S_{u_1}$.
In the case that $u_1\in S_u$, let $S_u'= S_u\setminus\{u_1\}$ and $S_v'= S_v\cup \{u_1\}$. 
Observe that,
$$ \Delta(S_{-u}, S_u') \leq -\alpha + w(A_{\{u,u_1\}}(u) \cap A_u)$$ 
By the previous remark about distances, we know that all the vertexes  
$x\in A_{\{u,u_1\}}(u) \cap A_u$ verify $d(x,v)>\beta$, 
but after adding $\{v,u_1\}$ all of them and $u$ become at distance $\leq \beta$ from $v$, 
therefore
$$\Delta(S_{-v}, S_v') \leq \alpha - w_u- w(A_{\{u,u_1\}}(u) \cap A_u).$$
Hence,   $\Delta(S_{-u}, S_u')+ \Delta(S_{-v}, S_v') \leq -w_u < 0$. 
Therefore, either  $\Delta(S_{-u}, S_u')<0$ or $\Delta(S_{-v}, S_v')<0$ and then $S$ can not be a \NE.

The case $u\in S_{u_1}$, follows in a similar way by 
interchanging the roles of $u$ and $u_1$. 
\end{proof}

The previous result can be refined to get better bounds on the 
diameter  when all the nodes are celebrities or when  at least one 
of the nodes is a celebrity. 

\begin{property}\label{DiameterNE} 
Let $\Gamma= \langle  V, (w_u)_{u\in V}, \alpha,\beta \rangle$ be a 
\ourgamen and let  $G$ be a  \NE graph of  $\Gamma$,  then
\begin{itemize}
\item   if $\wm\leq \alpha  < \wM$,   $diam(G)\leq 2 \beta$ and, 
\item  if  $\alpha < \wm$, $diam(G)\leq \beta$.
\end{itemize}
\end{property}
\begin{proof}
When $\wm\leq \alpha  < \wM$, there is a celebrity $u\in V$ with  $w_u > \alpha$. 
We know that $diam(u) \leq \beta$. 
Let $x$ and $z$ be any two different nodes of $G$, then $d(x,u)\leq \beta$ and 
$d(z,u)\leq \beta$. Therefore, $d(x,z)\leq d(x,u) + d(z,u)\leq 2\beta$ and the claim follows.
When  $\alpha < \wm$, each $u\in V$ is a celebrity, thus  $diam(u)\leq \beta$. 
Therefore  $diam(G)\leq \beta$.   
\end{proof}

For \NE trees we have a trivial lower bound of $2$ on  the diameter as a star is a \NE graph. 
For non-tree \NE graphs we  provide a  lower bound on the diameter. 
We first prove  a technical result.

\begin{lemma}\label{lem:cycle}
Let $\Gamma= \langle  V, (w_u)_{u\in V}, \alpha,\beta \rangle$ be a \stargame. 
In a \NE graph of $\Gamma$ containing at least one cycle, if $u$  
is  a node of a cycle and  $diam(u) \leq \beta - k$, for some $k\geq 1$, 
then the length of any cycle containing $u$ is bigger than $2k+2$.
\end{lemma}
\begin{proof} 
Let us suppose that $S$ is a \NE and that $G=G[S]$  contains a 
cycle $C$ through  a node $u$ such that $diam(u) \leq \beta - k$, 
for some  $k\geq 1$. Assume that $C$ is the shortest cycle containing 
$u$ and that the length $\ell$ of  $C$  verifies   $\ell \leq 2k+2$.  
We split the proof in two cases, depending on the parity of  $\ell$.

\smallskip
\noindent
\emph{Case 1: $C$ has odd length, $\ell = 2i+1$}.  
Let $v_1,v_2$ be the two vertexes in $C$ that are at  distance  
$i$ of $u$ in $C$, as $C$ is of minimal length  $d(u,v_1)=d(u,v_2)= i$.  
By our hypothesis,  $2i+1\leq 2k+2$ and thus $i \leq k$. 
Assume w.l.o.g. that $v_2\in S_{v_1}$ and consider the strategy 
$S_{v_1}'=S_{v_1}\setminus \{v_2\}$. Let $G'=G[S_{-v_1},S_{v_1}']$. 
Notice that    $d_{G'}(v_2, u) = i$. 
Therefore, $diam_{G'}(v_1)\leq k+\beta-k = \beta$, 
by selecting a path  going through $u$, 
so $\Delta(S_{-v_1},S_{v_1}') \leq -\alpha < 0$ and $G$ would not be a \NE graph.

\smallskip
\noindent
\emph{Case 2: $C$ has even length, $\ell = 2i$}. 
Let $v$ be the antipodal vertex to $u$, at distance $i$ 
from $u$ in $C$ and let  $v_1,v_2$ be the two vertexes 
in $C$ that are at  distance  $i-1$ of $u$ in $C$. 
By our hypothesis,  $2i\leq 2k+2$ and thus $i -1\leq k$. If $v\in S_{v_1}$, 
consider the strategy   $S_{v_1}'=S_{v_1}\setminus \{v\}$. 
Using the same arguments as in Case 1 and the fact that the distance 
from $v_1$ to $u$ in $C$ is $\leq k$, we  conclude that $S$ is not a \NE. 
The same happens when $v\in S_{v_2}$. It remains to consider the case 
in which $v_1,v_2\in S_v$.  Consider the strategy $S_v'=(S_v \cup \{u\})\setminus\{v_1,v_2\}$.
Now all shortest paths in $G$ from $v$ passing through $v_1$ or $v_2$ can be 
rerouted trough $u$ with and increment in length of at most $i-1\leq k$.  
Therefore,  $diam_{G'}(v) \leq 1 + \beta -k \leq \beta$.  
Thus $\Delta(S_{-v},S_{v}')\leq -\alpha < 0$ and $G$ would not be a \NE graph.
\end{proof}

\begin{proposition}\label{DiameterLowerBound}
Let $\Gamma= \langle  V, (w_u)_{u\in V}, \alpha,\beta \rangle$
be a \stargame and let $G$  be a \NE graph of $\Gamma$. 
If $G$ is not a tree, then  $diam(G)> \beta/2$.
\end{proposition}

\begin{proof} 
Let $G$ be a \NE graph containing at least one cycle. If $diam(G)\geq  \beta$, the claim holds.
Assume that  $diam(G)\leq  \beta-1$. 
We know that the length of the shortest cycle $C$ is  $\leq 2\, diam(G)+1$. 
Let $u$ be any node of $C$. Then, we have  $diam(u)\leq diam(G) = \beta - (\beta-diam(G))$. 
By Lemma~\ref{lem:cycle}, $2\, diam(G)+1 > 2(\beta-diam(G))+2$. 
The last inequality implies $diam(G) > \beta/2$.
\end{proof}


\section{MaxBD network creation games versus celebrity games}
\label{sec:MaxBD}

In this section we show that  \maxbds are equivalent to  \ourgames
where all players are celebrities. 
Let us formalize the definition of \maxbd taken from \cite{BiloGP:15}. 

A \emph{\maxbd} $\Gamma$ 
is defined by a tuple $\langle V, D \rangle$ where
$V=\{1,\dots, n\}$ is the  set of players and 
$D$, $1\leq D\leq n-1$,  is an integer representing the bound 
on the diameter of each node $v\in V$. 
Concepts like  strategy of a player, strategy profile, and outcome 
graph are defined  as in  the  \ourgame model.
The \emph{cost} of player $u$ in the strategy profile  
$S$ is $\cmaxbd_u(S) = |S_u|$, if $diam_{G[S]}(u)\leq D$; $\cmaxbd_u(S) = + \infty$, otherwise.
The social cost of $S$ is  $\Cmaxbd(S)=\sum_{u\in V} \cmaxbd_u(S)$.
Notice that by the definition of \maxbd,  any strategy profile $S$ that is either 
a  \NE or a social \OPT satisfies $diam_{G[S]}(u)\leq D$ and, therefore $C(S)=\alpha \, \Cmaxbd(S)$.

In the following we show how a \maxbd  can be translated, preserving \NE, 
to different  instances of \ourgames.  A \maxbd  can be seen as a \ourgame 
in which the weights of each one of the players are  large enough so that buying 
a link is more suitable  than having  an eccentricity greater than the given 
distance bound. On the other hand, we show that every  
\ourgame with $\alpha < w_{min}$  corresponds to a \maxbd,  again preserving \NE.

\begin{proposition}\label{CelebritiesVSMaxBD}
Let $V$ be a set of players and $\beta>1$. 
Let $\Gamma=\langle V, \beta\rangle$ be a \maxbd  and   
$\Gamma'=\langle V, (w_v)_{v\in V}, \alpha, \beta \rangle$   
be a \ourgame where $\alpha < w_{min }$.     
Then, $\NE(\Gamma)=\NE(\Gamma')$.
\end{proposition}

\begin{proof}
Let us prove first that  $\NE(\Gamma)\supseteq \NE(\Gamma')$. 
Assume that $S \in \NE(\Gamma')$. 
By Property~\ref{DiameterNE}, $diam_{G[S]}(u) \leq \beta$ and 
this implies that $c_u(S)=\alpha |S_u|$.
Let us suppose that  $S$ is not a \NE for $\Gamma$. 
Then there exists a player $u \in V$ and a strategy 
$S_u'$  such that $\cmaxbd_u(S')< \cmaxbd_u(S)=  |S_u|$, where $S'=(S_{-u},S_u')$. 
Hence, the only possibility is that $diam_{G[S']}(u) \leq \beta$ and $|S_u'| < |S_u|$. 
Therefore $c_u(S')< c_u(S)$ contradicting the fact that $S\in \NE(\Gamma')$.

It remains to  show that  $\NE(\Gamma) \subseteq \NE(\Gamma')$. 
Let  $S\in\NE(\Gamma)$. We know that   $diam(G[S]) \leq \beta$ and
$\cmaxbd_u(S)=|S_u|$, for  $u\in V$. For $\Gamma'$, we have that $c_u(S)= \alpha |S_u|$. 
Now let us assume that $S$ is not a \NE of $\Gamma'$.
Then, there exists 
$u\in V$ and a strategy $S_u'$ such that $c_u(S)>c_u(S')$, 
where $S'=(S_{-u},S_u')$. Since $w_v > \alpha$, 
then we have that $c_u(S') = \alpha |S_u'|  + \sum_{\{v \mid d_{G[(S')]}(u,v)> \beta \}} w_v \geq \alpha (|S_u'|  + |\{v \mid d_{G[S']}(u,v)>\beta \}|)$.   
Consider the strategy profile $S''=(S_{-u},S_u'')$, 
where  $S_u''=S_u' \cup \{v \mid d_{G[S']}(u,v)>\beta \}$. 
We have    $diam_{G[(S'']}(u)\leq \beta$. 
Thus,  $c_u(S'')=\alpha |S_u''|$. 
Combining the inequalities 
$c_u(S) = \alpha |S_u| > c_u(S'')=\alpha |S_u''|$.  
Then, $|S_u|>|S_u''|$ contradicting the fact that $S \in \NE(\Gamma)$.
 \end{proof}

The previous correspondences allow us to get a relationship on the \PoA and the \PoS,
\begin{corollary}
Let $V$ be a set of players and $\beta>1$. 
Let $\Gamma=\langle V, \beta\rangle$ be a \maxbd and   let $\Gamma'=\langle V, (w_v)_{v\in V}, \alpha, \beta \rangle$   be a \ourgame where  $\alpha < w_{min}$.     Then,
\begin{itemize}
\item $\PoS(\Gamma)=\PoS(\Gamma')=1$,
\item $\PoA(\Gamma)=\PoA(\Gamma')$.
\end{itemize}
\end{corollary}

\begin{proof}
We know  by Proposition~\ref{prop:NEStarIn} that  the  star tree is a social optimum as well 
as a \NE for \ourgames when $\alpha < w_{min}$. 
The same occurs for \maxbds as it was shown in Theorem 3.3 of  \cite{BiloGP:12}. 
Hence, $\PoS(\Gamma)=\PoS(\Gamma')=1$.

For the \ourgame $\Gamma'$, we have that 
$$\PoA(\Gamma)= \frac{ \alpha  \max_{S \in  \NES(\Gamma)} \{ |E(G[S])|\}} {\alpha (n-1)} =
\frac{ \max_{S \in  \NES(\Gamma)}   \{|E(G[S])|\}}  {(n-1)}.$$
By Proposition \ref{CelebritiesVSMaxBD},
$\NE(\Gamma)=\NE(\Gamma')$. Thus \NE of $\Gamma'$ have 
diameter $\leq \beta$ and then  we can conclude that $\PoA(\Gamma)=\PoA(\Gamma')$. 
\end{proof}

Hence, the  upper bound on the \PoA of \maxbds shown 
in \cite{BiloGP:15} is also an upper bound  for \ourgames. 
In the subsequent sections we consider the general case where the 
assumption  $\alpha < w_{min}$ is not required.

We have considered here only the uniform version of the \maxbds in 
which the eccentricity bound is equal for all the nodes. \cite{BiloGP:15} 
considers also a non uniform version in which each node has a different 
eccentricity requirement.  It is easy to extend  Proposition~\ref{CelebritiesVSMaxBD} 
to show that the set of \NE is preserved provided that the eccentricity bounds are the 
same in both games and $\alpha < w_{min}$.  
Therefore, non-uniform \ourgames have unbounded \PoA,  
as it was shown for the non-uniform \maxbds in \cite{BiloGP:15}. 


%
\section{Bounding the price of anarchy}
\label{sec:poa-lower-bound}
We provide here  bounds on the contribution of the edges and the weights to the social 
cost of \NE graphs. 
Those bounds allow us to provide a bound on the \PoA. 
Our next result establishes an upper bound on  the 
\PoA in terms of $W$  and $\alpha$. 

\begin{lemma}
\label{lem:ub-poa}
For a \ourgamen  $\Gamma= \langle  V, (w_u)_{u\in V}, \alpha,\beta \rangle$, 
$\PoA (\Gamma)\leq W/\alpha$. 
\end{lemma}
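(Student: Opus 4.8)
The plan is to bound the cost of an arbitrary \NE graph $G$ of $\Gamma$ against $\optv(\Gamma)$ by controlling separately the two components of the social cost: the link cost $\alpha|E(G)|$ and the distance penalty $P=\sum_{\{(u,v)\mid u<v,\ d_G(u,v)>\beta\}}(w_u+w_v)$. Since $\Gamma$ is a \stargame, the argument in the proof of Theorem~\ref{theo:PoS} shows that necessarily $\alpha<W$, so by Proposition~\ref{prop:OptCost} we have $\optv(\Gamma)=\alpha(n-1)$. I would fix a \NE profile $S$ with $G=G[S]$; by Proposition~\ref{prop:NEConnectedComponent} such a $G$ is either connected or equal to $I_n$, but the two bounds below hold for any outcome graph, so no case distinction is needed.

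For the penalty term I would simply observe that
$$P \;\le\; \sum_{\{(u,v)\mid u<v\}}(w_u+w_v) \;=\; (n-1)\,W,$$
since every vertex occurs in exactly $n-1$ unordered pairs, so each weight $w_u$ is counted $n-1$ times. Dividing by $\optv(\Gamma)$ already gives a contribution $P/\optv(\Gamma)\le W/\alpha$.

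The key step is to bound the link cost. For each player $u$ I would consider the deviation to the empty strategy $S_u'=\emptyset$ and compare costs in $S$ and $(S_{-u},S_u')$. Deleting the links bought by $u$ can only increase distances, hence can only increase the penalty paid by $u$, and that increase is at most the total weight of the remaining vertices, $W-w_u$. As $S$ is a \NE this deviation is not profitable, so $\alpha|S_u|\le W-w_u$. Recalling that a \NE profile corresponds to an orientation of the edges of $G$, i.e. $\sum_{u}|S_u|=|E(G)|$, summing this inequality over all players yields $\alpha|E(G)|\le\sum_{u}(W-w_u)=(n-1)W$.

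Combining the two estimates gives $C(G)=\alpha|E(G)|+P\le 2(n-1)W$, hence $C(G)/\optv(\Gamma)\le 2W/\alpha$ for every \NE graph $G$, which proves $PoA(\Gamma)=O(W/\alpha)$. The only non-routine ingredient is the link-cost bound; the penalty estimate is elementary double counting. I expect the point to get right is that the empty-strategy deviation yields the clean per-player inequality $\alpha|S_u|\le W-w_u$ regardless of whether $u$ stays connected through links paid by other players, which is exactly what lets the summation over players go through without any analysis of the global structure of $G$.
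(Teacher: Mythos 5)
Your proof is correct and is essentially the paper's argument: both hinge on the fact that at a \NE the deviation to the empty strategy is unprofitable, which after summing over players bounds the total link cost by $(n-1)W$, and both then divide by $\optv(\Gamma)=\alpha(n-1)$. The only difference is bookkeeping: the paper retains the refined per-player inequality $\alpha|S_u|\le\sum_{\{v\neq u \mid d(u,v)\le\beta\}}w_v$, so that the nearby weights absorb the distance penalty exactly and give $C(G)\le (n-1)W$, i.e.\ $PoA\le W/\alpha$, whereas your cruder bound $\alpha|S_u|\le W-w_u$ together with the trivial estimate on the penalty term loses a factor of $2$, which is immaterial for the stated $O(W/\alpha)$ (and you are in fact more explicit than the paper about why $\optv(\Gamma)=\alpha(n-1)$ holds for a \ourgamen).
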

\begin{proof}
Let $S$ be a \NE of $\Gamma$ and let $G=G[S]=(V,E)$.  
As $S$ is a \NE, no player has an incentive to deviate from $S$. 
Thus,  for any $u\in V$,
\[0\leq \Delta(S_{-u},\emptyset) \leq - \alpha|S_u | + w(\{v\mid d(u,v) \leq \beta\})  - w_u.\]
Summing up, for all $u\in V$, we have
\[
\displaystyle 0\leq \sum_{u\in V} (- \alpha|S_u | + 
\sum_{\{v\mid d(u,v) \leq \beta\} } w_v  - w_u) = 
-\alpha |E| + \sum_{u \in V}\sum_{\{v\mid d(u,v) \leq \beta\} } w_v  -  W.
\]
Therefore,
\begin{align*}
C(G)  & =  \alpha |E| + \sum_{u \in V} \sum_{\{v\mid d(u,v) > \beta\} } w_v \\
          & \displaystyle\leq \sum_{u \in V} \left(\sum_{\{v\mid d(u,v) \leq \beta\} } w_v  + 
          \sum_{\{v\mid d(u,v) > \beta\} } w_v \right)-W 
           =  (n-1) W.
\end{align*}

Hence, $\PoA (\Gamma)\leq \frac{ (n-1)W}{\alpha (n-1)} =\frac{ W}{\alpha}$.
\end{proof}

Using the previous lemma we can get an $O(n)$ upper bound on the \PoA  of 
\ourgamens. Let us see that this upper bound can be improved by bounding 
the weight component and the link component of the social cost, separately.

Define the \emph{weight component} of the social cost, for a critical distance 
$\beta$, $W(G,\beta)$, as  
\[
W(G,\beta)= \sum_{u \in V(G)} \sum_{\{v\mid d(u,v)>\beta\}} w_v = 
\sum_{\{\{u,v\}\mid d(u,v)>\beta \}} (w_u +w_v).
\]

\begin{lemma} 
\label{prop:w-cost}
Let $\Gamma= \langle  V, (w_u)_{u\in V}, \alpha,\beta \rangle$ be a \ourgamen.
In a \NE graph $G$,  $W(G,\beta) = O(\alpha n^2/\beta)$.
\end{lemma}

\begin{proof}
Let $S$ be a \NE and  $G=G[S]$ be  a \NE graph.  Let  $u \in V$ and   let $b=diam(u)$. 
Recall that, by Proposition~\ref{DiameterGeneral}, $b\leq 2\beta +1$.  
We have three cases.

\bigskip

\noindent
\emph{Case 1: $b < \beta$}.  
For any node  $v\in V\setminus\{ u\}$ consider the strategy 
$S'_v=S_v\cup \{u\}$, and let $G'=G[S_{-v},S'_v ]$. 
By connecting to $u$ we have $diam_{G'}(v)\leq \beta$ and, 
as  $S$  is a \NE, we have  
\[
\Delta(S_{-v},S'_v )= \alpha - \sum_{\{x\mid d_G(x,v)>\beta\}} w_x\geq 0.
\]
Therefore we have
\[\sum_{\{x\mid d_G(x,v)>\beta\}} w_x  \leq \alpha.\] 
As $b<\beta$ we conclude that 
\[W(G,\beta)     \leq  n \alpha.\] 
Since  $1 <\beta \leq n-1$, we get  $n/\beta \leq \alpha n^2/\beta$. 

\bigskip

\noindent
\emph{Case 2:  $ b \geq \beta$ and $b \geq 6$}.
For $1\leq i\leq b$, consider the set $A_i(u) = \{v\mid d(u,v)=i\}$ and the sets
\begin{align*}
C_1 &=\{v\in V \mid 1\leq d(u,v)\leq b/3\} = \cup_{1 \leq i\leq b/3} A_i(u),\\
C_2 &=\{v\in V \mid b/3 < d(u,v)\leq 2b/3\} =  \cup_{b/3 < j\leq2 b/3} A_j(u),\\
C_3 &=\{v\in V \mid 2b/3 < d(u,v)\leq b\} = \cup_{2b/3 < k \leq  b} A_k(u).
\end{align*}
As $b=diam(u)$,  $A_\ell(u)\neq \emptyset$, $1\leq \ell\leq b$, 
and all those sets constitute a partition of $V\setminus\{u\}$. 
As $ b\geq 6$,  for each $\ell$,  $1\leq \ell \leq 3$,  $C_\ell$  contains 
vertexes at a $b/3\geq 2$ different distances. 
Therefore, for  $1\leq \ell\leq 3$,  it must exist $i_\ell$ such that 
$A_{i_\ell}(u)\subseteq C_\ell$ and $|A_{i_\ell}(u)| \leq 3n/b$, 
otherwise the total number of elements in $C_\ell$ would be bigger than $n$.    

For any $v\in V$, let $S'_v=  (S_v \cup A_{i_1}(u) \cup A_{i_2}(u)\cup A_{i_3}(u))\setminus\{v\}$ 
and let  $G'=G[S_{-v},S'_v]$. Since $b\leq 2 \beta +1$,
 we have that $b/3 < \beta$. Hence, by construction,  $diam_{G'}(v)\leq \beta$. 
 Therefore, as $S$ is a \NE, we have
\[
0 \leq \Delta(S_{-v},S'_v )\leq \frac{9 n \alpha}{\beta }- \sum_{\{x\mid d_G(x,v)>\beta\}} w_x.
\]
Thus,
\[
\sum_{\{x\mid d_G(x,v)>\beta \}} w_x \leq \frac{9 n \alpha}{\beta}\text{ and }
W(G,\beta)\leq \frac{9 n^2 \alpha}{\beta }.
\]

\bigskip

\noindent
\emph{Case 3: $b \geq \beta$ and $b\leq 6$}. 
Consider the sets $A_i(u) = \{v\mid d(u,v)=i\}$, $0\leq i\leq b$, 
and the sets $C_0=\{v\in V \mid d(u,v) \text{ is even}\}$ and $C_1=V\setminus C_0$. 
Both sets are non-empty and one of them must have $\leq n/2$ vertexes. 
By connecting to all the vertexes in the smaller of those sets the diameter 
of the resulting graph is 2. Therefore, using a similar argument as in case 2, 
we get
\[
W(G,\beta)\leq \frac{ n^2 \alpha}{2},  
\]
which is $O(n^2/\beta)$ as  $\beta<6$.
Which concludes the proof.
\end{proof}

Our next result provides a bound for the number of edges in a \NE graph.

\begin{lemma} 
\label{prop:link-cost}
Let $\Gamma= \langle  V, (w_u)_{u\in V}, \alpha,\beta \rangle$ be a \ourgamen.
In a \NE graph $G$,  $|E(G)| \leq  n-1 +\frac{3n^2}{\beta}$.
\end{lemma}

\begin{proof}
Let $S$ be a \NE of $\Gamma$ and let $G=G[S]=(V,E)$. 
Let $u$ be a node in $V$.
For any $v \in S_u$, recall that $A_{\{u,v\}}(u)$ denotes the set of nodes $z$ 
such that all shortest paths from $u$ to $z$ use the edge $\{u,v\}$. 
Observe that  $v\in A_{\{u,v\}}(u)$
and that, for $v,v'\in S_u$ with $v\neq v'$, $A_{\{u,v\}}(u)\cap A_{\{u,v'\}}(u) = \emptyset$.

Let   $B(G)$  be the set of  bridges of  $G$, recall that $|B(G)|\leq n-1$.  
For $u\in V$, let $\overline{B}(u)= \{x\in S_u \mid  \{u,x\}\notin B(G)\}$. 
Observe that $|E| = |B(G)|  + \sum_{u\in V} |\overline{B}(u)|$.  

Let us show that for any $v \in S_u$ such that  $\{u,v\}$ is not a bridge,  
there exists $z\in A_{\{u,v\}}(u)$ such that $d(u,z) > \beta/3$. 

Let us suppose that  $\{u,v\}$ is not a bridge and that,  
for every  $z\in A_{\{u,v\}}(u)$   $d(u,z) \leq \beta/3$.  
In such a case there must be some edge $\{x,y\}$ with  
$x \notin A_{\{u,v\}}(u)$ and $y \in A_{\{u,v\}}(u)$. 
Furthermore, we can select $x$ so that $x\neq u$ and such 
that there is a shortest path $P$ from $u$ to $x$ using only 
vertexes in  $V \setminus A_{\{u,v\}}(u)$.  
Observe that $d(u,x) \leq d(u,y) +1$.  Furthermore,  for  $z \in  A_{\{u,v\}}(u)$,  
there exists  a path from $u$ to $z$ that follows $P$  from $u$ to $x$, 
the edge  $\{x,y\}$, a shortest path from $y$ to $v$ 
(part of a shortest path to $u$ through $A_{\{u,v\}}(u)$), 
and a shortest path  from $v$ to $z$  (through $A_{\{u,v\}}(u)$). 
Notice that  $d(u,x) \leq \beta/3 +1$, $d(y,v)\leq \beta/3 -1$,  and $d(v,z)\leq \beta/3 -1$. 
Hence,  there is a path from $u$ to $z$ of 
distance $\leq (\beta/3+1)+1+(\beta/3-1)+(\beta/3-1) = \beta$ which does not use $\{u,v\}$. 
Thus,  $u$  has incentive to remove  $\{u,v\}$ since 
$\Delta (S_{-u}, S_u \setminus \{v\})= -\alpha < 0$, which contradicts the fact that $S$ is a \NE. 

Therefore, for $v\in \overline{B}(u)$,  there exists $z\in A_{\{u,v\}}(u)$ 
such that $d(u,z) > \beta/3$  and as all the predecessors of $z$ in a 
shortest path from $u$ belong to  $A_{\{u,v\}}(u)$, we have $|A_{\{u,v\}}(u)| > \beta/3$. 
Observe that $n \geq  \sum_{\{v \in S_u | v \in \overline{B}(u)\}} |A_{{u,v}}(u)| \geq |\overline{B}(u)| (\beta/3)$, thus 
$|\overline{B}(u)| \leq \frac{3 n}{\beta}$. 
Finally, combining the two bounds, we have
$|E| = |B(G)|  + \sum_{u\in V} |\overline{B}(u)| \leq (n-1) + \frac{3 n^2}{\beta}$.
\remove{
Also, it is clear that if $v'\in A_v(u)$ with $u-v-v_1-v_2-...-v_k-v'$ a minimal length 
path from $u$ to $v$ then $v_i \in A_{v}(u)$ for $1 \leq i \leq k$ (and also $v\in A_v(u)$ trivially). 
This gives us that the number of pairs $(u,v')$ 
such that there exists $v$ with $v' \in A_v(u)$ is at least $(|E|-(n-1))\beta/3$. 
On the other hand, there are at most $\binom{n}{2}$ of the same pairs. 
Putting this two inequalities together we get: 
$$|E| < \frac{3}{2\beta}n^2-\frac{3}{2\beta}n+n-1 = O(n^2/\beta)$$
}
\end{proof}

Observe that, the previous results  jointly with  $\optv(\Gamma) =\alpha(n-1)$,  
leads us to the following  upper bound of the \PoA. 

\begin{theorem} 
For a \ourgamen  $\Gamma$,  $\PoA(\Gamma) = O(\min\{n/\beta,W/\alpha\})$.
\end{theorem}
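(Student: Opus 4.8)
The plan is to bound the two additive components of the social cost of a \NE graph separately against $\optv(\Gamma)$. Since $\Gamma$ is a \stargame, the argument in Theorem~\ref{theo:PoS} shows that necessarily $\alpha < W$, so by Proposition~\ref{prop:OptCost} we have $\optv(\Gamma) = \alpha(n-1)$. For any \NE graph $G=G[S]=(V,E)$ I would write $C(G) = \alpha|E| + W(G,\beta)$. Proposition~\ref{prop:w-cost} immediately handles the weight component: $W(G,\beta) = O(\alpha n^2/\beta)$, which contributes $W(G,\beta)/(\alpha(n-1)) = O(n/\beta)$ to the PoA. Hence it only remains to show that $\alpha|E|/(\alpha(n-1)) = |E|/(n-1) = O(n/\beta + \beta)$, i.e. that $|E| = O(n^2/\beta + \beta n)$.

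Since $S$ is an orientation of $G$ we have $|E| = \sum_{v\in V}|S_v|$, and I would split this sum according to the out-degree $|S_v|$. The vertices with $|S_v|\leq 6n/\beta$ contribute at most $n\cdot 6n/\beta = O(n^2/\beta)$ edges in total, matching the first term. Let $H = \{v\in V \mid |S_v| > 6n/\beta\}$ collect the remaining high-out-degree vertices; the point is to argue that $H$ is small.

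For this I would invoke Lemma~\ref{NEdegree}: each $v\in H$ admits a set $X_v\subseteq S_v$ with $|X_v|\geq 3n/\beta$ such that every edge $\{v,x\}$ with $x\in X_v$ is a bridge. The key observation is that the bridges of any graph form a forest (they are exactly the edges surviving the contraction of the $2$-edge-connected components), so $G$ has at most $n-1$ bridges in total. Moreover, since in a \NE each edge has a unique buyer, the sets $X_v$ for distinct $v\in H$ are pairwise disjoint collections of bridges (each $X_v$ consists of edges bought by $v$). Hence $|H|\cdot (3n/\beta) \leq \sum_{v\in H}|X_v| \leq n-1 < n$, which gives $|H| < \beta/3 = O(\beta)$. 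Using the trivial bound $|S_v|\leq n-1$ for each $v\in H$, the high-out-degree vertices contribute at most $|H|\cdot(n-1) = O(\beta n)$ edges. Adding the two contributions yields $|E| = O(n^2/\beta + \beta n)$ and therefore $|E|/(n-1) = O(n/\beta + \beta)$, which combined with the $O(n/\beta)$ weight term gives $PoA(\Gamma) = O(n/\beta + \beta)$.

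I expect the main obstacle to be controlling the high-out-degree vertices: a single vertex may buy up to $n-1$ links, so without extra structure the edge count could be as large as $\Theta(n^2)$, which would ruin the bound. Lemma~\ref{NEdegree} is precisely the tool that removes this danger, since it forces almost all links out of a high-degree vertex to be bridges and thereby caps the number of such vertices through the global limit of $n-1$ bridges. Once that structural fact is in place, the rest of the argument is routine bookkeeping over the two degree regimes.
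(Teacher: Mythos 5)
Your proof is correct and follows essentially the same route as the paper: bound the weight component by Proposition~\ref{prop:w-cost}, split the edge count by the out-degree threshold $6n/\beta$, and use Lemma~\ref{NEdegree} to cap the number of high-out-degree vertices at $O(\beta)$, against $\optv(\Gamma)=\alpha(n-1)$. If anything, you are slightly more careful than the paper, which leaves implicit both that $\alpha<W$ holds for \stargames and the counting argument (bridges form a forest of size at most $n-1$, and the sets $X_v$ are pairwise disjoint since each edge has a unique buyer) that yields $|X|\leq \beta/3$.
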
 

We finalize this section showing a family of \ourgamens having $\PoA=\Omega(n)$, for $\beta=2$.

\begin{figure}
\begin{center}
\begin{tikzpicture}[every node/.style={circle,scale=0.8}, node distance=7mm, >=latex]
\node[draw](a) at (0,5) {$\alpha$};
\node[draw](b) at (2,4) {$w_1$};
\node[draw](c) at (4,4) {$w_2$};
\node[draw](d) at (6,5) {$\alpha$};

\node[draw](a1) at (0,1) {$\alpha$};
\node[draw](b1) at (2,2) {$w_3$};
\node[draw](c1) at (4,2) {$w_4$};
\node[draw](d1) at (6,1) {$\alpha$};

\draw[->] (b) to node {}(a);
\draw[->] (c) to node {}(b);
\draw[->] (c) to node {}(d);
\draw[->] (b1) to node {}(a1);
\draw[->] (b1) to node {}(c1);
\draw[->] (c1) to node {}(d1);
\draw[->] (b) to node {}(b1);
\draw[->] (b) to node {}(c1);
\draw[->] (c1) to node {}(c);
\draw[->] (c) to node {}(b1);
\end{tikzpicture}
\end{center}
\caption{A \NE for the game $\Gamma(4,\alpha,w)$. \label{fig:LB}}
\end{figure}
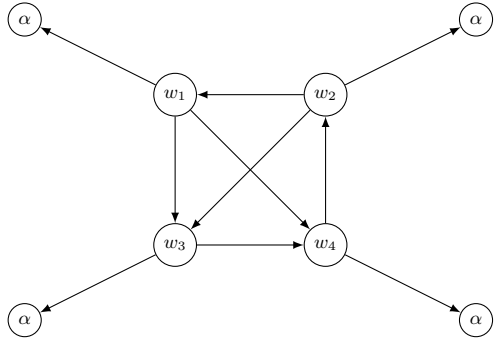

\begin{lemma}
Let  $k>2$,  $\alpha>0$ and let $w=(w_1,\dots w_k)$ be a positive weight assignment.  
There is a  \ourgamen  $\Gamma=\Gamma(k,\alpha,w)$ with $n=2k$ players 
and $\beta=2$  having $\PoA(\Gamma) > \frac{3 n}{8}$.   
\end{lemma}
\begin{proof}
Consider the game $\Gamma_k= \langle  V, (w_u)_{u\in V}, \alpha,\beta \rangle$, where 
\begin{itemize}
\item $V=\{u_1,\dots, u_k\}\cup \{v_1\dots,v_k\}$,
\item $w(u_i)=\alpha$ and $w(v_i)=w_i$, for $1\leq i\leq n$, 
\item $\beta =2$.
\end{itemize}

Consider any strategy profile $S$ where, 
for $1\leq i\leq k$,  $\{u_1,\dots, u_k\}\cap S_{v_i}=\{u_i\}$ and  $S_{u_i}=\emptyset$,  
and such that in $G[S]$ the subgraph induced by $\{v_1\dots,v_k\}$ is a clique.  
An example of such a strategy, for $k=4$,  is given in Figure~\ref{fig:LB}.

Observe that there is no vertex in $G[S]$ that is at distance 1 of more than 
one vertex in $\{u_1,\dots, u_k\}$. Furthermore, any edge $(u,v)$ lies in the 
unique shortest path from $u$ to a vertex in $\{u_1,\dots, u_k\}$. Therefore  $S$ is a \NE.

We have $C(G[S]) = \alpha \left(\frac{k(k-1)}{2} + k\right) + \alpha k(k-1) = \alpha (3 k(k-1) + 2k)/2$. 
As a star tree is an \OPT graph and $n=2k$, we conclude that
$$\PoA(\Gamma)=   \frac{\alpha \frac{ \frac{3n}{2}(\frac{n}{2}-1)  + 2 \frac{n}{2}}{2}}{\alpha (n-1)}
                      = \frac{3n}{8}\frac{(n - 1) +\frac{1}{3}}{(n-1)} =  \frac{3n}{8} \left(1 +\frac{1}{3(n-1)}\right).$$
\end{proof}

\section{Price of anarchy on Nash equilibrium trees}
\label{sec:poa-trees}
Now we complement the results of  the previous sections by providing a 
constant upper bound on the \PoA  when we restrict the \NE graphs to be trees.
We can find in the literature different  models for which the diameter or the \PoA 
can not be proved to be constant on  general \NE graphs, but  they are shown 
to be  constant in the  case of  \NE trees (see for example \cite{AlonDHL:13,AlonDHK:14,Ehsani:2015}).

In order to get a tighter upper bound for the \PoA on \NE trees, we first improve the 
bound on the diameter of \NE trees to   $\beta + 1$. 

\begin{proposition}
\label{pro:diamtrees}
Let $\Gamma= \langle  V, (w_u)_{u\in V}, \alpha,\beta \rangle$
be a \ourgamen. If $T$ is  a \NE tree of $\Gamma$,  $diam(T)\leq \beta + 1$.
\end{proposition}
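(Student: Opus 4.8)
The plan is to argue by contradiction. Suppose $T$ is a \NE tree with $diam(T)\ge \beta+2$, and fix a diametral path $u_0,u_1,\dots,u_D$ with $D=diam(T)\ge\beta+2$; by Proposition~\ref{DiameterGeneral} we already know $D\le 2\beta+1$, and both endpoints $u_0,u_D$ are leaves. The first step is to record a structural fact that is special to trees and is exactly what upgrades $2\beta+1$ to $\beta+1$: since every edge of a tree is a bridge, deleting $\{u_i,u_{i+1}\}$ splits $T$ into a side $A_i\ni u_i$ and a side $B_i\ni u_{i+1}$, and any subtree hanging off the path near an endpoint cannot be deep. Concretely, if some $z$ lies on the $u_0$-side $A_1$ of the edge $\{u_1,u_2\}$ with $d(u_1,z)=r$, then $z,u_1,u_2,\dots,u_D$ is a path of length $r+(D-1)$, so $r+(D-1)\le D$ and hence $r\le 1$. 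Thus the whole ``stub'' $A_1$ sits within distance $1$ of $u_1$, and symmetrically at the other end.

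The key deviation then treats the pendant edge $\{u_0,u_1\}$ according to which endpoint pays for it. If $u_0$ buys this edge, I consider the pure reconnection in which $u_0$ replaces it by $\{u_0,u_2\}$, so the link cost is unchanged. Since $u_0$ is a leaf, its distances change from $d(u_0,z)=1+d(u_1,z)$ to $1+d(u_2,z)$; splitting $V$ by the bridge $\{u_1,u_2\}$ one checks that the only vertices that can leave the ``far'' set are those of $B_1$ at distance exactly $\beta$ from $u_1$ (these include $u_{\beta+1}$, which exists because $D\ge\beta+2$), while the only vertices that can enter it are those of the stub $A_1$ at distance exactly $\beta-1$ from $u_1$. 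By the radius bound $A_1\subseteq\{x:d(u_1,x)\le1\}$, this second set is empty as soon as $\beta-1>1$, so the cost change is at most $-w_{u_{\beta+1}}<0$, contradicting that $T$ is a \NE.

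It remains to handle the complementary orientation, and here the \emph{celebrity} notion gives a shortcut. If instead $u_1$ buys $\{u_0,u_1\}$, then $u_0$ owns no edge to delete, so I use $u_1$'s option to drop that pendant edge: doing so only pushes $u_0$ out of $u_1$'s $\beta$-ball, for a cost change $-\alpha+w_{u_0}$, whence $\alpha\le w_{u_0}$. If this inequality is strict then $u_0$ is a celebrity, and since every celebrity has eccentricity at most $\beta$ this contradicts $d(u_0,u_D)=D\ge\beta+2$. The same dichotomy applies at $u_D$, so the only configuration surviving all of the above is one in which both pendant edges are bought by their inner neighbours and both endpoints have weight exactly $\alpha$.

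I expect the main obstacle to be precisely these leftover borderline situations: the small critical distance $\beta=2$, where the radius-$1$ stub can still meet the distance-$(\beta-1)$ shell, and the weight tie $\alpha=w_{u_0}=w_{u_D}$ that makes the celebrity argument non-strict. To close them I would look for an auxiliary profitable move on the interior of the path: reconnect the owner of the edge $\{u_1,u_2\}$ one step toward the centre, pulling the heavy tail $u_{\beta+2},\dots,u_D$ inward, and bound the newly created far vertices using the same depth estimate (subtrees hanging off $u_2$ have depth at most $2$); when a tie persists, a second vertex of the tail should break it. The clean cases ($u_0$ buys its edge, or $u_0$ is a strict celebrity) are immediate, so the real difficulty is packaging the finitely many borderline orientations into one contradiction, and that is where I would concentrate the care.
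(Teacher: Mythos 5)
Your ``clean'' cases are correct and coincide with the paper's own arguments: the depth-one bound on the stubs hanging near the endpoints, the inward swap $\{u_0,u_1\}\to\{u_0,u_2\}$ when the leaf pays and $\beta\ge 3$, and the inference $w_{u_0}\ge\alpha$ when $u_1$ pays. The problem is that the two situations you defer are not routine loose ends: they are exactly where the proof has to do its work, and the plan you sketch for them does not go through. For the weight tie, the missing idea is to aim the new edge at the far leaf's \emph{neighbour} rather than at the leaf itself: if $u_{D-1}$ pays for $\{u_{D-1},u_D\}$, so that $w_{u_D}\ge\alpha$, let the opposite endpoint $u_0$ buy the single edge $\{u_0,u_{D-1}\}$. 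Since $D\le 2\beta+1$, every $u_j$ with $j\ge \beta+1$ ends up at distance $1+(D-1-j)\le \beta$ from $u_0$, so the deviation gains at least $w_{u_{D-1}}+w_{u_D}\ge w_{u_{D-1}}+\alpha>\alpha$, i.e.\ it is strictly profitable even when $w_{u_D}=\alpha$. (The deviator must be $u_0$ and not $u_1$: when $D=\beta+2$ the vertex $u_{D-1}$ lies at distance exactly $\beta$ from $u_1$, and the gain degenerates to $\alpha-w_{u_D}\le 0$.) This one move disposes of your tie case for every $\beta$, with no need for the celebrity dichotomy.

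For $\beta=2$, Proposition~\ref{DiameterGeneral} leaves only $D\in\{4,5\}$, and the paper closes these with a \emph{paired} deviation rather than an interior reconnection: $u_0$ swaps its pendant edge for $\{u_0,u_{D-1}\}$ while $u_D$ swaps its pendant edge for $\{u_D,u_1\}$; writing both cost differences in terms of the weights $w(T_i)$ of the subtrees hanging on the diametral path, they sum to $-w_{u_0}-w_{u_D}<0$, so at least one of the two players strictly improves. Your proposed interior move (the owner of $\{u_1,u_2\}$ reconnecting one step toward the centre) fails in general: the subtree hanging at $u_2$ can have depth $2$ and arbitrarily large weight, and the reconnection pushes its deepest vertices one step farther from the deviator (for instance from distance $\beta$ to $\beta+1$ when $\beta=3$), so the induced loss can exceed any gain from pulling the tail $u_{\beta+2},\dots,u_D$ inward. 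In short, the skeleton of your argument matches the paper's, but the two holes you flag are genuine gaps, and closing them requires the two ideas above rather than the move you propose.
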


\begin{proof}
Let  $T$ be a tree such that $T=G[S]$ where $S$ is a \NE of $\Gamma$. 
Let $d=diam(T)$ and let $P: u=u_0, u_1, \dots, u_d$ be a diametral path of $T$. 
Assume that $d> \beta +1$.    For $1\leq i < d$, let  $T_i$   
be  the connected subtree containing $u_i$ after removing 
edges $\{u_{i-1}, u_i\}$ and $\{u_i,u_{i+1}\}$.   
As $P$ is a diametral path, both   $u$ and $u_d$ are leaves in $T$. 
Furthermore, $T_1$ and $T_{d-1}$ are star trees.   
In general, the distance from the leaves of any $T_i$ to both $u$ and $u_d$ is at most $d$.  

We  consider two cases depending on who is paying for the connections to the end points of $P$.  

\noindent
\emph{Case 1: $u \in S_{u_1}$ or $u_d\in S_{u_{d-1}}$}.  
W.l.o.g. assume that $u_d\in S_{u_{d-1}}$.
As $S$ is a \NE we have  $w_{u_d} \geq \alpha$.  Consider the strategy
$S'_{u_1}=S_{u_1} \cup \{u_{d-1}\}$, then $\Delta(S_{-u_1},S'_{u_1})\leq \alpha - w_{u_d}- w_{u_{d-1}} < 0$ and $T$ can not be a \NE graph.

\noindent
\emph{{Case 2:} $u_1\in S_u$ and $u_{d-1}\in S_{u_d}$}. 
When $\beta \geq 3$. Set
$S'_u=S_u-\{u_1\}\cup \{u_2\}$ and $T'=G[(S_{-u},S'_u)]$. 
Observe that, for $x\in T_1$, $d_{T'}(u,x)\leq 3 \leq \beta$ and, 
for $x\notin T_1\cup\{u\}$, $d_{T'}(u,x)=d_T(u,x)-1$.  
Therefore, $\Delta(S_{-u},S'_u) \leq -w_{u_{\beta+1}}< 0$.  
Therefore, $T$ is not a \NE graph. 

The previous argument fails when  $\beta = 2$ as  there might be $x\in T_1$ with $d_{T'}(u,x)= 3$. 
From  Proposition~\ref{DiameterGeneral}, we know that $d\leq 2\beta+1 \leq 5$. 
Let us see that it can not be the case that $d=4$ or  $d=5$.  
Let $S'_u=S_u-\{u_1\}\cup \{u_{d-1}\}$ and 
$S'_{u_{d}}=S_{u_{d}}-\{u_{d-1}\}\cup \{u_{1}\}$. 
Let $T^1=G[(S_{-u},S'_u)]$ and $T^2=G[(S_{-u_d},S'_{u_d})]$.

When $d=4$,  for any $x\in T_2$,  $d_{T^1}(u,x) = d_{T}(u,x)$   
and $d_{T^2}(u_4,x) = d_{T}(u_4,x)$.  Therefore, we have 
\[\Delta(S_{-u},S'_u)=  w(T_1) -  w(T_{3}) - w_{u_4} \text{ and } 
\Delta(S_{-u_4},S'_{u_4})= w(T_{3}) - w(T_1) - w_u.\] 
Thus
$\Delta(S_{-u},S'_u) + \Delta(S_{-u_4},S'_{u_4}) =  -w_{u} - w_{u_4} < 0$ and 
one of the two players has an incentive to deviate.

When $d=5$,   we have $\Delta(S_{-u},S'_u)=  w(T_1) + w_{u_2} - w_{u_3} - w(T_{4}) - w_{u_5}$ 
and $\Delta(S_{-u_5},S'_{u_5})= w(T_{4}) +  w_{u_3} - w_u - w(T_1) - w_{u_2}$. 
Therefore we have that 
$\Delta(S_{-u},S'_u) + \Delta(S_{-u_5},S'_{u_5}) =  -w_{u} - w_{u_5} < 0$ and 
one of the two players has an incentive to deviate.
\end{proof}

We need to prove   first an auxiliary result. 

\begin{lemma}
\label{lem:Wtrees}
Let $\Gamma= \langle  V, (w_u)_{u\in V}, \alpha,\beta \rangle$ be a \ourgamen
and let $G$ be a \NE graph  of $\Gamma$. 
If there is  $v\in V$ with  $diam_G(v)\leq \beta-1$, 
then $W(G,\beta) \leq \alpha(n-1)$.
\end{lemma}

\begin{proof}
Let $S\in\NE(\Gamma)$ and let  $G=G[S]$.
Let $u\in V$, $u\neq v$. 
If $v\notin S_u$,  
$\Delta(S_{-u}, S_u\cup \{v\})\geq \alpha -\sum_{\{x | d_G(u,x)> \beta\}}w_x \geq 0$. 
But, if $v\in S_u$,  $diam(u)\leq \beta$.

Hence, $\alpha \geq \sum_{\{x | d_G(u,x)> \beta\}}w_x$ and 
summing over all $u\not = v$ we have that $\alpha(n-1) \geq W(G,\beta)$.
\end{proof}

The proof of the upper bound for the \PoA on \NE trees 
uses the previous statements and examines the particular cases $\beta=2,\, 3$. 

\begin{theorem}\label{theo:poatrees}
The \PoA on  \NE trees of a \ourgamen  is at most $2$. 
\end{theorem}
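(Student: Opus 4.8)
The plan is to reduce the whole statement to a single weight inequality. Since $\Gamma$ is a \stargame it has a connected \NE graph, so by the argument in the proof of Theorem~\ref{theo:PoS} we have $\alpha<W$ and hence, by Proposition~\ref{prop:OptCost}, $\optv(\Gamma)=\alpha(n-1)$. A \NE tree $T$ has exactly $n-1$ edges, so its link cost is also $\alpha(n-1)$ and $C(T)=\alpha(n-1)+W(T,\beta)$; consequently $C(T)/\optv(\Gamma)=1+W(T,\beta)/(\alpha(n-1))$, and it suffices to prove $W(T,\beta)\le\alpha(n-1)$.

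For the main range $\beta\ge 4$ I would combine Proposition~\ref{pro:diamtrees} with Lemma~\ref{lem:Wtrees}. Proposition~\ref{pro:diamtrees} gives $diam(T)\le\beta+1$, and a center $c$ of a tree has eccentricity equal to the radius $\lceil diam(T)/2\rceil\le\lceil(\beta+1)/2\rceil$. Since $\lceil(\beta+1)/2\rceil\le\beta-1$ holds for all $\beta\ge 3$, the center satisfies $diam_T(c)\le\beta-1$, so Lemma~\ref{lem:Wtrees} applies verbatim and yields $W(T,\beta)\le\alpha(n-1)$.

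The real obstacle, and the reason the small cases $\beta=2$ and $\beta=3$ are singled out, is that there the gap between the diameter bound $\beta+1$ and the threshold $\beta-1$ demanded by Lemma~\ref{lem:Wtrees} collapses. For $\beta=2$ a center of a diameter-$3$ tree has eccentricity $2>\beta-1=1$, so no vertex meets the hypothesis of the lemma and one must argue from the shape of $T$. If $diam(T)\le 2$ then $T$ is a star and its center has eccentricity $1=\beta-1$, so Lemma~\ref{lem:Wtrees} still applies; the only remaining shape is a \emph{double star} of diameter $3$ with central edge $\{a,b\}$ and pendant leaf sets $L_a,L_b$. Here the pairs at distance $>2$ are exactly the cross pairs $(x,y)$ with $x\in L_a$, $y\in L_b$, so $W(T,2)=|L_b|\,w(L_a)+|L_a|\,w(L_b)$. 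I would then read the needed bound off a single deviation: a leaf $y\in L_b$ buying the edge to $a$ brings all of $L_a$ within distance $2$, so the \NE condition forces $\alpha-w(L_a)\ge 0$, i.e. $w(L_a)\le\alpha$, and symmetrically $w(L_b)\le\alpha$. Substituting gives $W(T,2)\le(|L_a|+|L_b|)\alpha=(n-2)\alpha\le\alpha(n-1)$.

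For $\beta=3$ the center argument of the second paragraph already applies, since the center has eccentricity at most $2=\beta-1$; if a self-contained treatment is preferred, the same single-vertex deviation works, as rooting $T$ at its center $c$ places every vertex within distance $2$ of $c$, so any vertex buying the edge to $c$ reaches all others within distance $3=\beta$ and its far-weight is bounded by $\alpha$, and summing over the $n-1$ vertices different from $c$ gives $W(T,3)\le\alpha(n-1)$. I expect the only delicate point of the whole proof to be this small-$\beta$ boundary bookkeeping: for $\beta\ge4$ the Lemma~\ref{lem:Wtrees} route is immediate, whereas for $\beta\in\{2,3\}$ one has to fall back to the explicit star and double-star structure and extract $w(\cdot)\le\alpha$ from individual beneficial deviations.
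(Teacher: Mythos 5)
Your proposal is correct and takes essentially the same route as the paper's proof: Proposition~\ref{pro:diamtrees} plus a tree center gives a vertex of eccentricity at most $\beta-1$, so Lemma~\ref{lem:Wtrees} yields $W(T,\beta)\leq\alpha(n-1)$, and $\beta=2$ is settled by the double-star structure with $w(L_a),w(L_b)\leq\alpha$ extracted from a leaf's deviation, exactly as in the paper. The only differences are cosmetic: you make the reduction to $\optv(\Gamma)=\alpha(n-1)$ explicit and fold $\beta=3$ into the general center argument, whereas the paper treats $\beta=3$ as a separate subcase.
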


\begin{proof}
Let $T$ be a \NE tree of $\Gamma$. 
From Proposition~\ref{pro:diamtrees}  we have a bound on the diameter, 
so we know that  $diam(T)\leq \beta +1$. 
Since $T$ is a tree, we have that there exists $u \in V$ such that  
$diam(u)\leq (diam(T)+1)/2\leq \beta/2 + 1$.
If $\beta \geq 4$, then $diam(u)\leq \beta -1$. 
By Lemma~\ref{lem:Wtrees}, $C(T) \leq 2\alpha(n-1)$. 
Hence, the \PoA of \NE trees of $\Gamma$ is at most 2 for $\beta \geq 4$.

In the case of $\beta = 3$, either $diam(T)\leq 3$ or $diam(T)= 4$. 
In the first case $C(T)=\alpha(n-1)$ and in the second there is $u$ 
with $diam_T(u)=2=\beta-1$ and we can use Lemma~\ref{lem:Wtrees}.

Finally, we consider the case $\beta=2$. 
Notice that the unique tree $T$ with diameter $3$ is a double star, 
a graph that is formed by connecting the centers of two stars. 
Assume that a \NE tree $T$ is formed by $T_u$, a star with center $u$, 
and $T_v$, a star graph with center $v$, joined by the edge $(u,v)$.
Let $L_u$ ($L_v$) be the set of  leaves in $T_u$ ($T_v$). As $T$ is a \NE 
graph we have that  $w(L_u),w(L_v)\leq\alpha$. 
Furthermore 
\begin{align*}
C(T) = & \alpha (n-1) + \sum_{w\in L_u} w(L_v) + \sum_{w\in L_v} w(L_u) \leq    \alpha (n-1) + \sum_{w\in L_u} \alpha + \sum_{w\in L_v} \alpha\\
& \leq      \alpha (n-1)  + \alpha (n-2)\leq 2\alpha (n-1).
\end{align*}
\end{proof}

Note that in a \NE tree $T$, if $\alpha >\wM$, 
for an edge $\{u,v\}$ connecting a leaf $u$, it must be the case that $v\in S_u$. 
Then, in the proof of Proposition~\ref{pro:diamtrees}, 
we only have the case $u_1\in S_u$. In such case $diam(T)\leq \beta$. Hence, if  $\alpha >\wM$,
the \PoA on  \NE trees is  $1$.

\begin{corollary}
Let $\Gamma= \langle  V, (w_u)_{u\in V}, \alpha,\beta \rangle$ 
be a \ourgamen such that  $\alpha > \wM$. 
For any  \NE tree of $\Gamma$, $diam(T)\leq \beta$ and therefore the \PoA on  \NE trees is  $1$.
\end{corollary}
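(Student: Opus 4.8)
The plan is to first isolate the structural fact hinted at just before the statement: when $\alpha > \wM$, in any \NE tree $T$ every leaf pays for its own incident edge. Let $u$ be a leaf of $T$ and $v$ its unique neighbour. The edge $\{u,v\}$ is owned by exactly one endpoint; if it were owned by $v$ (that is, $u\in S_v$), then deleting it via $S_v'=S_v\setminus\{u\}$ disconnects only $u$ and yields $\Delta(S_{-v},S_v')\leq -\alpha+w_u\leq -\alpha+\wM<0$, contradicting that $S$ is a \NE. Hence $v\in S_u$, and since $u$ has degree one, $S_u=\{v\}$. This is exactly what rules out Case~1 in the proof of Proposition~\ref{pro:diamtrees} (the subcase $u\in S_{u_1}$ there forced a weight $\geq\alpha$, impossible once $\alpha>\wM$).

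Next I would push the diameter bound of Proposition~\ref{pro:diamtrees} from $\beta+1$ down to $\beta$. By that proposition $diam(T)\leq\beta+1$, so I assume for contradiction $diam(T)=\beta+1=:d$ and fix a diametral path $x_0,x_1,\dots,x_d$; its endpoints $x_0,x_d$ are leaves, hence by the previous step $S_{x_0}=\{x_1\}$ and $S_{x_d}=\{x_{d-1}\}$. For $\beta\geq 3$ I would reuse the shortcut deviation of Proposition~\ref{pro:diamtrees}: set $S_{x_0}'=\{x_2\}$ and $T'=G[(S_{-x_0},S_{x_0}')]$. Then every vertex that was at distance $\beta+1$ from $x_0$ (in particular $x_d$, now at distance $d-1=\beta$) moves within distance $\beta$, while the only vertices whose distance to $x_0$ increases form the star $T_1$ centred at $x_1$, which after the rewiring lie at distance at most $3\leq\beta$. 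Thus no vertex becomes newly far and at least $x_d$ stops being far, giving $\Delta(S_{-x_0},S_{x_0}')\leq -w_{x_d}<0$, a contradiction.

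The delicate point, and the main obstacle, is $\beta=2$, where the shortcut above fails because the displaced star $T_1$ would sit at distance $3>\beta$ — precisely the reason Proposition~\ref{pro:diamtrees} treats $\beta=2$ separately. Here $d=3$ forces $T$ to be a double star with centres $a=x_1$, $b=x_2$ and non-empty leaf sets $A=L_a\ni x_0$ and $B=L_b\ni x_3$. I would instead rewire a leaf of each side toward the opposite centre: rewiring $x_0$ to $b$ changes its weight cost from $w(B)$ to $w(A)-w_{x_0}$, and symmetrically rewiring $x_3$ to $a$ changes its weight cost from $w(A)$ to $w(B)-w_{x_3}$. The two \NE inequalities $w(A)-w_{x_0}-w(B)\geq 0$ and $w(B)-w_{x_3}-w(A)\geq 0$ add up to $-w_{x_0}-w_{x_3}\geq 0$, impossible since all weights are positive. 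Hence $diam(T)\leq\beta$ in every case.

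Finally I would convert the diameter bound into the price-of-anarchy statement. If $diam(T)\leq\beta$ then every pair of vertices is within the critical distance, so the weight component vanishes, $W(T,\beta)=0$, and $C(T)=\alpha|E(T)|=\alpha(n-1)$ since a tree on $n$ vertices has $n-1$ edges. Because $\Gamma$ is a \stargame we have $\alpha<W$ (as shown in the proof of Theorem~\ref{theo:PoS}), so Proposition~\ref{prop:OptCost} gives $\optv(\Gamma)=\alpha(n-1)=C(T)$. As this equality holds for every \NE tree, the PoA restricted to \NE trees equals $1$.
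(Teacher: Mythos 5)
Your proof is correct and follows essentially the same route as the paper: the observation that $\alpha>\wM$ forces every leaf of a \NE tree to own its incident edge, which eliminates Case~1 of Proposition~\ref{pro:diamtrees} and allows the Case~2 rewiring argument to be rerun with the threshold lowered to $d=\beta+1$, after which $diam(T)\leq\beta$ yields $C(T)=\alpha(n-1)=\optv(\Gamma)$ and hence PoA equal to $1$. The only difference is one of completeness: the paper's remark simply asserts that in this case $diam(T)\leq\beta$, whereas you explicitly carry out the $\beta\geq 3$ shortcut deviation and the $\beta=2$ double-star summation (mirroring the paper's own $d=4,5$ arguments), thereby filling in details, in particular the $d=3$ case for $\beta=2$, that the paper's terse justification leaves implicit.
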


To tighten the upper bound  let us analyze the properties of the \NE trees with diameter $\beta+1$

\begin{lemma}\label{lem:tight-tree}
Let $T$ be a \NE tree of $\Gamma= \langle  V, (w_u)_{u\in V}, \alpha,\beta \rangle$  
having $diam(T)=\beta +1$, for some $\beta\geq 3$. 
Let $P=u, u_1, \dots, u_\beta, v$ be a diametral path in $T$ and let $S$ 
be  a \NE such that $T=G[S]$. We have that
\begin{enumerate}
\item $u$ and $v$ are leaves of $T$.
\item $S_u=S_v=\emptyset$.
\item $w(u)=w(v)=\alpha$.
\item $P$ is the unique diametral path in $T$. 
\end{enumerate}
\end{lemma}
\begin{proof}
Statement 1 follows from the fact that $T$ is a tree with diameter $\beta+1$. 

To prove the second statement, assume that $S_u\neq \emptyset$.  
As $u$ is a leaf it must be the case that $S_u=\{u_1\}$. 
Consider the strategy $S'_u=\{u_2\}$. Taking into account  
that $d_T(u_2,v)=\beta-1$ and that the tree rooted at $u_2$ 
after deleting $(u,u_1)$ and $(u_2,u_3)$ has depth at most 2, 
we have that $\Delta(S_{-u},S'_u)\leq -w(v) <0$. 
Contradicting the fact that $T$ is a \NE tree.  
A symmetric argument shows that $S_v=\emptyset$.

To prove the third statement we consider two cases.

\smallskip
\noindent
\emph{Case 1:} $w(u)>\alpha$.  Let $S'_v = \{u_1\}$, 
then $\Delta(S_{-v},S'_v) \leq \alpha - w(u) < 0$. Thus $T$ could not be a \NE.

\smallskip
\noindent
\emph{Case 2:} $w(u)<\alpha$.  
By 2 we know that $S_u=S_v=\emptyset$, therefore $u\in S_{u_1}$. 
Taking $S'_{u_1}=S_{u_1}\setminus \{u_1\}$ we have 
$\Delta(S_{-u_1},S'_{u_1}) \leq w(u)- \alpha<0$. 
Again $S$ could not be a \NE.

We conclude that $w(u)=\alpha$.  A symmetric argument shows that $ w(v) = \alpha$.

To prove the last statement assume that $T$ has two  
diametral paths with length $\beta+1$. Let $u,v,u',v'$ be four vertexes 
such that $d(u,v)=d(u',v')=\beta+1$.
We consider two cases. 

\smallskip
\noindent
\emph{Case 1:} the four vertexes are different.   
Let $P$ be the shortest path from $u$ to $v$ and $P'$ the 
shortest path from $u'$ to $v'$.  Let us first show that $P$ and $P'$ 
must share at least one point.  
Otherwise let $y$ be the vertex  in $P$ 
that is closest to $P'$ and  let $x$ be the vertex in $P'$ that is closest to $y$. 
By construction $P'$ lies in the subtree rooted at $y$ after removing the edges 
in $P$, thus $d(y,x)>0$. 
Therefore, $\max\{d(u,y),d(y,v)\}+d(x,y)+\max\{d(u',x),d(x,v')\} > \beta+1$. 
Contradicting the fact that $T$ has diameter $\beta+1$. 

Thus $P$ and $P'$ share at least one point. 
Let $x(y)$ be the vertex common to $P$ and $P'$ that is closer to $u(v)$.  
If there is only one common point $x=y$. 
Observe that when $x=y$ it must happen that $x$ is the central point of both paths, 
that is $\beta+1$ must be even and $d(u,x)=d(v,x)=d(u'x)=d(v'x)= (\beta+1)/2$. 
When $x\neq y$ assume without loss of generality that $u'$ is the vertex 
in the subtree rooted at $x$ after removing $P$. 
In such a case, $d(u',x)=d(u,x)\leq  (\beta+1)/2$ 
and  $d(v,y)=d(v',y)\leq (\beta+1)/2$ as otherwise the tree 
will not have diameter $\beta+1$.   
Thus $d(u,v')=\beta+1$.  
By 2 we know that $S_u=\emptyset$ and by 3 that $w(v)=w(v')=\alpha$. 
Consider the strategy profile, $S'_u=\{y\}$. 
We have that   $\delta(S_{-u},S'_u)\leq \alpha - w(v)-w(v')<0$. Therefore $T$ cannot be a \NE.

\smallskip
\noindent
\emph{Case 2:} two vertexes are the same. Without loss of generality assume that $u'=u$. 
Let $y$ be the branching point of the paths from $u$ to $v$ and $u$ to $v'$. 
As in the previous case, we have that $d(y,v)=d(y,v')\leq (\beta+1)/2$. 
Considering $S'_u=\{y\}$ we have again that   $\delta(S_{-u},S'_u)\leq \alpha - w(v)-w(v')<0$. 
Therefore $T$ cannot be a \NE. 

\smallskip
We conclude that there are only two vertexes at distance $\beta+1$ in $T$.      
\end{proof}

Putting all together we get an upper bound on the \PoA on \NE trees when $\beta\neq 2$.
\begin{theorem}\label{theo:11}
The \PoA on  \NE trees of a \ourgamen  with $\beta\geq 3$ and $n$ players is at most $1+\frac{2}{n-1}$. 
\end{theorem}
\begin{proof}
For \NE trees with diameter $\leq \beta$ the social cost is $\alpha(n-1)$ 
but for \NE with diameter $\beta+1$, by Lemma~\ref{lem:tight-tree}, 
the social cost is $\alpha(n-1) + 2\alpha$. 
As a star is an optimal graph with social cost $\alpha(n-1)$ the claim follows.
\end{proof}

\begin{figure}
\begin{center}
\begin{tikzpicture}[every node/.style={circle,scale=0.8}, node distance=7mm, >=latex]
\node[draw](a) at (0,4) {$\alpha$};
\node[draw](b) at (2,4) {$w_1$};
\node[draw](c) at (4,4) {$w_2$};
\node[draw](d) at (6,4) {$\alpha$};

\node[draw](a1) at (0,2) {$\alpha$};
\node[draw](b1) at (2,2) {$w_1$};
\node[draw](c1) at (4,2) {$w_2$};
\node[draw](d1) at (6,2) {$\alpha$};

\draw[->] (b) to node {}(a);
\draw[->] (c) to node {}(b);
\draw[->] (c) to node {}(d);
\draw[->] (b1) to node {}(a1);
\draw[->] (b1) to node {}(c1);
\draw[->] (c1) to node {}(d1);

\end{tikzpicture}
\end{center}
\caption{The \NE trees with diameter 3\label{fig:P3}}
\end{figure}
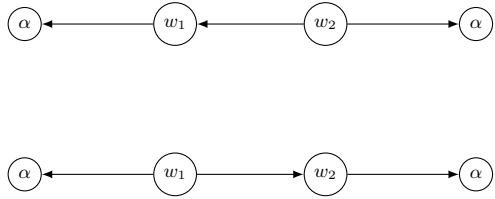

For the case $\beta=2$ it remains to analyze whether a double star can be a \NE for a \ourgamen.   

\begin{lemma}\label{lem:extra2}
Let $T$ be a \NE tree of  a \ourgamen 
$\Gamma= \langle  V, (w_u)_{u\in V}, \alpha,\beta \rangle$  let $\beta=2$. 
There is no \NE tree for $\Gamma$ with diameter 3 except when $|V|=4$ 
and at  least two players have weight $\alpha$.

\end{lemma}
\begin{proof}
Assume that a double star $T$  is formed by 
two starts $T_u$ and $T_v$ with centers $u$ and $v$ 
respectively and  the edge $\{u,v\}$.   
Let $L_u$ ($L_v$)  be the set of leaves in $T_u$ ($T_v$).  
Let $S$ be a \NE so that $T=G[S]$.   
As $T$ is a \NE we know that $w (L_u), w(L_v)\leq \alpha$, 
otherwise by  connecting a leaf to the other center their cost will decrease.   

Assume that $|L_u|,|L_v| \geq 2$. 
We have that,  for a leaf  $x$, $w(x)<\alpha$.  
So, $u\in S_x$, for $x\in L_u$, and $v\in S_y$, for $y\in L_v$. 
Otherwise, $u$ (or $v$) would benefit by disconnecting to their leaves.  
For any leaf $x\in L_u$ ($y\in L_v$), consider the strategy $S'_x=\{v\} $ ($S'_y=\{u\} $).  
For  $x\in L_u$, we have $\Delta(S_{-x}, S'_x)= w(L_u) -w(x) - w(L_v)\geq 0$, 
that is $w(x) \leq w(L_u)-w(L_v)$. 
For  $y\in L_v$, we have $\Delta(S_{-y}, S'_y)= w(L_v) -w(y) - w(L_u)\geq 0$, 
thus $w(y) \leq w(L_v)-w(L_u)$.  Which is impossible as the node weights are positive.  
Therefore $|L_u|= 1$ or $|L_v|=1$.  

Let us assume w.l.o.g that $L_u=\{x\}$.  
If  $w(x)<\alpha$ and  $x\in S_u$,  $\Delta(S_u,\emptyset) = w(x) - \alpha<0$, 
which is not possible. 
Therefore,  $u\in S_x$. But in such a case  $\Delta(S_{-x}, \{v\}) = - w(L_v) < 0$.    
So, $w(x) = \alpha$. 

If $|L_v|>1$,  let $y\in L_v$. As $w(L_v) \leq \alpha$ and $w(y)>0$, we have $w(y)<\alpha$. 
Therefore, $v\in S_y$, but then 
$$\Delta(S_{-y}, \{u\}) = -w(x)  +  w(L_v)-w(y) = -\alpha +  w(L_v)-w(y) < 0.$$ 
Contradicting that $S$ is a \NE. Thus, $L_v=\{y\}$ and, as for the case $L_u=\{x\}$, 
we can conclude that $w(y)=\alpha$. 

The unique  graph satisfying all conditions is a path on 4 vertexes. 
Furthermore the leaf nodes must have weight $\alpha$ and there 
are no restrictions for the  weights of the internal vertexes. 
It is easy to see that the unique orientations producing a 
\NE in this particular case are  the ones depicted in Figure~\ref{fig:P3}. 
\end{proof}

\begin{theorem}\label{theo:poatrees}
The \PoA  on \NE trees of \ourgamens   is $\leq 5/3$ and 
there are games for which a \NE  tree has cost $5 \, \optv/3$.   
\end{theorem}
\begin{proof}
For $\beta\geq 3$ and $n\geq 4$, the \PoA on \NE trees is at most $1+ \frac{2}{n-1}\leq 5/3$, 
by Theorem~\ref{theo:11}.
For  $\beta\geq 3$ and $n< 4$, all trees have diameter at most $\beta$, so  the \PoA on \NE trees is 1. 
For $\beta=2$ according to Lemma~\ref{lem:extra2} 
all \NE trees have diameter at most $\beta$ except for $P_3$ in some cases. 
When $P_3$ is a \NE we have that $C(P_3)=3\alpha + 2\alpha = 5\alpha$, 
giving the upper bound. As there are games for which $P_3$ 
is a \NE (see   Figure~\ref{fig:P3}), the claim follows.
\end{proof}

\section{Celebrity games for  $\beta = 1$}
\label{sec:beta1}
Let us now analyze the case  $\beta=1$. 
Observe that every player $u$ for each non-adjacent node  
$v$  pays  $w_v$, and  for each adjacent node   pays either 
$\alpha$  if he has bought the link,  or  $0$, otherwise.   
Notice that if $u$  establishes the link $(u,v)$, only the node $v$ 
will take profit of this decision.  Contrasting with this, when $\beta>1$, 
if player $u$  pays  a new link,  then all the nodes that get closer to $u$ 
but  not farther than $\beta$,  will take advantage of this new link.

This particular behavior allows us to  show that computing a best response 
becomes a tractable problem. Furthermore, the structure of \NE and \OPT  
graphs is quite different from the case of $\beta>1$ and we can obtain a tight bound for the \PoA.  

\begin{proposition}
\label{prop:Best-Response-2}   
The  problem of computing a best response of a player to a strategy profile in  \ourgames 
is  polynomial time solvable when $\beta = 1$.
\end{proposition}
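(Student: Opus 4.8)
The plan is to show that for $\beta = 1$, the best response cost function for a single player $u$ has a structure that decomposes nicely over the potential neighbors, which makes the optimization tractable. The key observation I would exploit is that when $\beta = 1$, player $u$ pays the weight $w_v$ of every vertex $v$ to which $u$ is \emph{not} directly linked (equivalently, every $v$ at distance $>1$ from $u$), since only direct neighbors are within the critical distance. The other players' strategies are fixed, so the only way $u$ can reach a vertex $v$ at distance $\leq 1$ is by buying the link to $v$ itself; connections bought by others among $V \setminus \{u\}$ are irrelevant to $u$'s distances (any such vertex is still at distance $\geq 2$ from $u$ unless $u$ links to it directly or some neighbor already links to $u$).

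First I would make precise the cost of a strategy $S_u \subseteq V \setminus \{u\}$. Writing $N(u)$ for the set of vertices $v$ with $v \in S_u$ or $u \in S_v$ (the latter being fixed by the other players), the vertices within distance $1$ of $u$ are exactly $N(u)$. Hence
\[
c_u(S_{-u}, S_u) = \alpha |S_u| + \sum_{v \notin N(u) \cup \{u\}} w_v.
\]
Let $F = \{ v \mid u \in S_v \}$ be the set of vertices that already link to $u$ (fixed). For any $v \in F$, buying the edge to $v$ is never beneficial, so an optimal $S_u$ can be taken disjoint from $F$. For each remaining candidate $v \in V \setminus (F \cup \{u\})$, adding $v$ to $S_u$ costs $\alpha$ and saves $w_v$; these choices are completely independent of one another because at $\beta = 1$ buying one link does not change the distance to any other vertex.

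Therefore I would conclude that the best response is obtained greedily and independently per vertex: include $v$ in $S_u$ exactly when $w_v > \alpha$ (and one may include $v$ with $w_v = \alpha$ indifferently), for every $v \notin F \cup \{u\}$. This requires only a single scan comparing each weight against $\alpha$, plus the computation of $F$ from the given strategy profile, all of which is clearly polynomial time. The step deserving the most care is the independence claim, namely verifying that at $\beta = 1$ the distance from $u$ to a vertex $v$ depends only on whether $v \in N(u)$ and not on any other links $u$ buys; I would justify this by noting that links among $V \setminus \{u\}$ are fixed, a vertex not adjacent to $u$ is at distance at least $2$ regardless of $u$'s other purchases, and buying the edge $\{u, v\}$ affects $u$'s distance to $v$ alone (bringing it to $1$) without creating any new distance-$1$ vertices. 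Once this is established the decomposition and the greedy rule follow immediately, giving the polynomial-time algorithm.
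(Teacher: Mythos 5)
Your proposal is correct. It rests on the same basic observation as the paper—that for $\beta=1$ the cost of player $u$ depends only on which vertices end up adjacent to $u$, so the optimization is combinatorially trivial—but the route is genuinely different. The paper fixes the cardinality $k=|D|$, argues by exchange that the best set of $k$ links is the set $D_k$ of the $k$ heaviest players (since the link cost $\alpha k$ is the same for all sets of size $k$), and then minimizes $\alpha k + W - w(D_k)$ over $k=0,\dots,n-1$; a sort plus a sweep. You instead use full per-vertex separability: each candidate $v$ contributes either $\alpha$ (if bought) or $w_v$ (if not), independently of all other choices, so the best response is the closed-form threshold set $\{v \notin F\cup\{u\} \mid w_v > \alpha\}$. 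Your version is more elementary (a single scan, no sorting) and, notably, more careful on one point: you explicitly handle the set $F=\{v \mid u \in S_v\}$ of players already linking to $u$, whose weights $u$ never pays and to whom $u$ should never buy a link. The paper's cost expression $c_u(S_{-u},D)=\alpha|D|+\sum_{v\notin D} w_v$ silently ignores $F$; as literally stated, its algorithm could include a heavy $v\in F$ in $D_k$ and pay $\alpha$ for nothing. This is a minor, easily repaired slip in the paper, but your treatment avoids it. What the paper's cardinality-sweep buys in exchange is a slightly more generic template (it would still work in settings where only the number of purchased links, not their identity, enters the link cost), whereas your greedy exploits, and requires, exact additive separability—which does hold here.
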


\begin{proof}
Let $S$ be a strategy profile of $\Gamma=\langle  V, (w_u)_{u\in V}, \alpha,1 \rangle$ and let $u\in V$.  
Consider another strategy profile $S'=(S_{-u}, S'_u)$, for some $S'_u\subseteq V\setminus\{u\}$. 
As $\beta=1$  we have
\[c_u(S')= \alpha |S'_u| + \sum_{v\notin S'_u} w_v.\]
Note that, when $|S'_u|=k$, the first component of the cost is the 
same and thus a best response on strategies with $k$ players can 
be obtained by taking from $S'_u$ the players with the $k$-th highest weights. 
Let $S'_u(k)$ be the set of those players and let $W_k= W-w(S'_u(k))$. 
Thus $c_u(S_{-u},S'_u(k))= \alpha k + W_k$.
To obtain a best response it is enough to compute the  value $k$ 
for which $c_u((S_{-u},S'_u(k)))$ is minimum and output $S'_u(k)$. 
Observe that the overall computation can be performed in polynomial time.  
\end{proof}

In order to show  a bound for the \PoA  we prove first some auxiliary results.
When  $\beta=1$  pairs of vertexes at distance bigger than one correspond 
to pairs of vertexes that are not connected by an edge and such a property 
does not hold for higher values of $\beta$.

\begin{proposition}\label{b1:ne}
Let $\Gamma=\langle  V, (w_u)_{u\in V}, \alpha,1 \rangle$ be a \ourgame. 
If  $G=(V,E)$  is a \NE graph of $\Gamma$,  for each $u,v\in V$, 
\begin{itemize} 
\item if either $w_u > \alpha$ or $w_v>  \alpha$  then $\{u,v\} \in E$,
 \item  if both $w_u < \alpha$ and $w_v <  \alpha$  then $\{u,v\} \notin E$,
\item otherwise  the edge  $\{u,v\}$ might or might not belong to $E$.
\end{itemize}
 \end{proposition}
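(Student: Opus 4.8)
The plan is to analyze the incentive structure of a single player under the critical distance $\beta=1$, since with $\beta=1$ a vertex $v$ is ``far'' from $u$ precisely when $\{u,v\}\notin E$. This means the second component of $u$'s cost simply sums the weights of all non-neighbors of $u$, so each potential edge $\{u,v\}$ contributes a clean trade-off: paying $\alpha$ to buy the link versus paying $w_v$ for leaving $v$ at distance greater than $1$. First I would fix a \NE profile $S$ with $G=G[S]$ and recall from the earlier observation that $S$ corresponds to an orientation of the edges of $G$, so for each edge exactly one endpoint pays. I would then consider an arbitrary pair $u,v\in V$ and examine whether the edge $\{u,v\}$ is present, using the deviation inequalities that define a \NE.

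For the first bullet, suppose without loss of generality $w_v>\alpha$ and that $\{u,v\}\notin E$. Then I would let $u$ deviate by adding $v$ to its strategy, i.e.\ consider $S_u'=S_u\cup\{v\}$. Since $\beta=1$, this deviation changes $u$'s cost by exactly $\Delta(S_{-u},S_u')=\alpha-w_v<0$ (the link costs $\alpha$ but removes the weight $w_v$ that $u$ was paying for $v$ being a non-neighbor, while no other distance is affected because with $\beta=1$ only immediate adjacency matters). This contradicts that $S$ is a \NE, so the edge must be present. The symmetric argument handles $w_u>\alpha$.

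For the second bullet, suppose $w_u<\alpha$ and $w_v<\alpha$ but $\{u,v\}\in E$. By the orientation property one endpoint, say $u$, pays for this link, so $v\in S_u$. I would let $u$ deviate by dropping $v$: take $S_u'=S_u\setminus\{v\}$, giving $\Delta(S_{-u},S_u')=-\alpha+w_v<0$, again contradicting the \NE condition. If instead $v$ pays, the symmetric deviation with $w_u<\alpha$ gives the contradiction. The third bullet then follows by elimination, or more concretely by noting that the boundary cases $w_u=\alpha$ or mixed situations make the relevant $\Delta$ equal to zero, so the player is indifferent and either configuration can occur in a \NE.

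The main subtlety to verify carefully is that adding or removing a single edge when $\beta=1$ has no side effects on the distances relevant to the cost: a deviation by $u$ only alters which vertices are at distance exactly $1$ from $u$ and cannot change $u$'s payment structure through longer paths, precisely because at $\beta=1$ every vertex other than the immediate neighbors already contributes its full weight regardless of the graph's remaining topology. This is exactly the clean separability that fails for $\beta>1$ (as the paper notes), and I would state it explicitly so that each $\Delta$ computation reduces to the one-edge trade-off $\alpha$ versus the single weight. Once this separability is pinned down, each of the three cases is an immediate one-line deviation argument, so I do not expect a genuine obstacle here beyond being careful about which endpoint owns the edge in the orientation.
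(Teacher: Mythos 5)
Your proof is correct and follows essentially the same route as the paper: both exploit the fact that for $\beta=1$ the cost decomposes additively over potential edges (so only single-edge deviations matter), and then settle each bullet by the one-edge trade-off $\alpha$ versus $w_v$, using the orientation of a \NE profile to identify who pays, with indifference ($\Delta=0$) covering the boundary cases. No genuine differences or gaps to report.
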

\begin{proof}
Let $S$ be a \NE and  let $G=G[S]=(V,E)$. 
Observe that due to the fact that $\beta=1$, for any player $u$,
\[c_u(S)= \alpha |S_u| + \sum_{\{v \mid v\neq u, \{u,v\}\not\in E\}} w_v.\]
The cost is thus expressed in terms of the existence or non existence 
of a connection between pairs of nodes and thus the strategy can be 
analyzed considering  only deviations in which a single edge is added or removed.
We analyze the different cases for players $u$ and $v$.

\smallskip
\noindent
\emph{Case 1:  $w_u>\alpha$}. For any player $v\neq u$,   
if the edge $\{u,v\}$ is not present in $G$ the graph cannot be a 
\NE graph as $v$  improves its cost by connecting to $u$.  
For the same reason,  if the edge is present either  $u\in S_v$ or $v\in S_v$.  
The latter case$v\in S_v$, can happen only when $w_v>\alpha$.  
Therefore,  the player that is  paying for the connection will 
not obtain any benefit by deviating. 

\smallskip
\noindent
\emph{Case 2:  $w_u,w_v < \alpha$}. If the edge $\{u,v\}$ is present in $G$ 
the graph cannot be a \NE  graph as  the player establishing the connection 
improves its cost by removing the connection to the other player. 
For the same reason, if the edge is not present none of the  players will  
obtain any benefit by deviating and paying for the connection.

\smallskip
\noindent
\emph{Case 3: $w_u,w_v = \alpha$}. The cost, for any of the players, 
of establishing the connection or not is the same.  
In consequence the edge can or cannot be in a \NE graph.

\smallskip
\noindent
\emph{Case 4:  $w_u=\alpha$ and $w_v< \alpha$}. 
Player $v$ is indifferent to be or not to be connected to $u$, 
but player $u$ in a \NE  will never include  $v$ in its strategy. 
Observe that again the edge can or cannot exists in a \NE graph but, 
if it exists,  it can only be the case that $u\in S_v$.
\end{proof}


Let us analyze now the structure of the \OPT graphs.

\smallskip
\noindent
\begin{proposition} \label{b1:opt}
 Let  $G=(V,E)$  be a \OPT graph of a 
 \ourgame $\Gamma=\langle  V, (w_u)_{u\in V}, \alpha,1 \rangle$.
For any $u,v\in V$, we have 
\begin{itemize} 
\item if  $w_u + w_v  <\alpha$ then    $\{u,v\} \notin E$,
 \item  if $w_u + w_v  > \alpha$ then    $\{u,v\} \in E$,
\item if $w_u+w_v=\alpha$ then    $\{u,v\}$ might or not be an edge in $G$.
\end{itemize}
\end{proposition}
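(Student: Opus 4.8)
The plan is to exploit the fact that, for $\beta=1$, the social cost splits into a sum of independent contributions, one per unordered pair of vertices, so that the optimization decouples completely. First I would recall the social cost expression for $\beta=1$: since two distinct vertices $u,v$ satisfy $d_G(u,v)>1$ exactly when they are non-adjacent, the weight component sums $w_u+w_v$ over precisely the non-edges, while the link component pays $\alpha$ per edge. Hence
\[
C(G) = \sum_{\{u,v\}\in E} \alpha \;+\; \sum_{\{u,v\}\notin E} (w_u+w_v),
\]
where both sums range over unordered pairs of distinct vertices. Equivalently, every one of the $\binom{n}{2}$ pairs $\{u,v\}$ contributes $\alpha$ to $C(G)$ when it is an edge and $w_u+w_v$ when it is not.

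The key observation is that each of these per-pair contributions depends only on whether that single pair is present in $E$, with no interaction between distinct pairs. Consequently the global minimum of $C(G)$ is obtained by minimizing each term separately: the optimal contribution of the pair $\{u,v\}$ is $\min\{\alpha,\, w_u+w_v\}$, realized by including the edge when $\alpha\le w_u+w_v$ and by omitting it when $\alpha\ge w_u+w_v$. Since any choice of edges yields a valid graph, this unconstrained per-pair minimization produces a genuine \OPT graph, so the decomposition argument is tight.

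This immediately yields the three cases. If $w_u+w_v<\alpha$, omitting $\{u,v\}$ is strictly cheaper, so in any \OPT graph $\{u,v\}\notin E$; if $w_u+w_v>\alpha$, including it is strictly cheaper, so $\{u,v\}\in E$; and if $w_u+w_v=\alpha$, both choices give the same contribution $\alpha$, so the edge may or may not appear. There is essentially no obstacle beyond recognizing this separability, which is exactly the property that makes $\beta=1$ tractable and that fails for larger $\beta$, where adding a single edge can simultaneously shorten many other pairwise distances and thereby couple the per-pair decisions.
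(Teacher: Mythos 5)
Your proof is correct and follows essentially the same route as the paper's: both write the social cost for $\beta=1$ as $\alpha|E|$ plus the sum of $w_u+w_v$ over non-adjacent pairs, and then observe that the per-pair contributions decouple, so each pair is optimized independently by comparing $\alpha$ with $w_u+w_v$. Your write-up just makes the separability and the exchange argument more explicit than the paper does.
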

\begin{proof}
Let $S$ be a strategy profile  and  let $G=G[S]=(V,E)$ be an \OPT graph. 
As we have seen before as $\beta=1$, for any player $u$,
\[c_u(S)= \alpha |S_u| + \sum_{\{v\mid v\neq u, \{u,v\}\not\in E\}} w_v,\]
and we get et the following expression for the social cost
\[C(G)=  \alpha |E|+ \sum_{\{u,v \mid u<v, \{u,v\} \not\in E\}}(w_u + w_v).\]
The above expression shows that to minimize the contribution to the cost,  
an edge $ \{u,v\}$  can be present in the graph only if  $w_u+w_v\geq \alpha$ 
and will appear for sure only when $w_u+w_v> \alpha$. Thus the claim follows. 
\end{proof}

From the previous characterizations we can derive a constant upper bound 
for the price of anarchy when $\beta=1$. 

\begin{theorem}\label{teo:beta1}
 Let $\Gamma=\langle  V, (w_u)_{u\in V}, \alpha,1 \rangle$ be a \ourgame.
$\PoA(\Gamma) \leq 2$.  
Furthermore the, ratio among the social cost of the best 
and the worst \NE graphs of $\Gamma$ is bounded by 2. 
\end{theorem}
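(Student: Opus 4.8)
The plan is to reduce everything to a pairwise comparison, exploiting the fact that when $\beta=1$ the social cost decomposes additively over unordered pairs of vertices. Recall from the proof of Proposition~\ref{b1:opt} that
$$C(G)=\alpha|E|+\sum_{\{u,v\mid u<v,\,\{u,v\}\not\in E\}}(w_u+w_v)=\sum_{\{u,v\mid u<v\}} c(u,v),$$
where the \emph{pair contribution} is $c(u,v)=\alpha$ when $\{u,v\}\in E$ and $c(u,v)=w_u+w_v$ when $\{u,v\}\not\in E$. By Proposition~\ref{b1:opt}, an \OPT graph realizes the smaller of the two options on every pair, so $\optv(\Gamma)=\sum_{\{u,v\mid u<v\}}\min\{\alpha,\,w_u+w_v\}$.

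First I would bound the PoA. Fixing any \NE graph $G$, I compare $c(u,v)$ against $\min\{\alpha,w_u+w_v\}$ pair by pair, using the case analysis of Proposition~\ref{b1:ne} organized by $\max\{w_u,w_v\}$ versus $\alpha$. If $\max\{w_u,w_v\}>\alpha$ the edge is forced present, so $c(u,v)=\alpha$, and since $w_u+w_v>\alpha$ the optimum is also $\alpha$ (ratio $1$). If both weights are below $\alpha$ the edge is forced absent, so $c(u,v)=w_u+w_v$; this matches the optimum when $w_u+w_v\le\alpha$, and otherwise the optimum is $\alpha$ with ratio $(w_u+w_v)/\alpha<2$ since $w_u+w_v<2\alpha$. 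In the boundary case $\max\{w_u,w_v\}=\alpha$, say $w_u=\alpha$, the edge is optional so $c(u,v)\in\{\alpha,\alpha+w_v\}$, while the optimum is $\alpha$ because $w_u+w_v=\alpha+w_v>\alpha$; the ratio is then at most $1+w_v/\alpha\le 2$ using $w_v\le\alpha$. Hence $c(u,v)\le 2\min\{\alpha,w_u+w_v\}$ on every pair, and summing gives $C(G)\le 2\,\optv(\Gamma)$, i.e. $PoA(\Gamma)\le 2$.

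For the best-versus-worst \NE ratio I would apply the same decomposition to two arbitrary \NE graphs $G_1,G_2$ at once. Proposition~\ref{b1:ne} pins $c(u,v)$ to a single value on every non-boundary pair, so there $c_{G_1}(u,v)=c_{G_2}(u,v)$; on a boundary pair the contribution lies in $[\alpha,2\alpha]$ in each $G_i$ (either $\alpha$ for the edge, or $w_u+w_v\le 2\alpha$ for its absence). In all cases $c_{G_1}(u,v)\le 2\alpha\le 2\,c_{G_2}(u,v)$, and summing over all pairs yields $C(G_1)\le 2\,C(G_2)$ for any two \NE graphs; in particular the worst-to-best \NE cost ratio is at most $2$.

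The computations are elementary; the only delicate point is justifying that on a boundary pair the edge may be taken present or absent independently of the rest of the graph, which is what legitimizes the per-pair treatment. This is exactly the $\beta=1$ feature: the benefit a player draws from a single edge equals the weight of its other endpoint and is unaffected by all other edges, so the \NE condition decouples edge by edge. This is precisely the content of Cases~3 and~4 in the proof of Proposition~\ref{b1:ne}.
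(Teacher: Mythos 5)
Your proof is correct, and it rests on the same foundation as the paper's: the additive pair decomposition of the social cost for $\beta=1$ together with Propositions~\ref{b1:ne} and~\ref{b1:opt}. The final step, however, is genuinely different and, in fact, tighter than the paper's. The paper compares the aggregate cost formula of an \OPT graph with that of the edge-minimal (worst) \NE graph, observes that they differ only on the pairs in $D=\{\{u,v\}\mid w_u,w_v\le\alpha,\ w_u+w_v>\alpha\}$, and then passes to the extremal instance $w_u=\alpha$ for all $u$ (where both $I_n$ and $K_n$ are \NE graphs) to read off the bound of $2$; the claim that this instance is the worst possible one is left essentially implicit. Your pointwise bound $c(u,v)\le 2\min\{\alpha,w_u+w_v\}$ on every pair is precisely the justification that step needs, and it delivers both claims of the theorem for arbitrary games and arbitrary pairs of \NE graphs in one stroke, so your version is arguably the cleaner write-up. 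Two small remarks. First, the ``delicate point'' you raise at the end --- that a boundary edge can be taken present or absent independently of the rest of the graph --- is not actually needed for your argument: you only ever use Proposition~\ref{b1:ne} as a \emph{necessary} condition on \NE graphs, and realizability of every boundary pattern would matter only if you wanted a matching lower bound showing the factor $2$ is attained. Second, in the best-versus-worst comparison the chain $c_{G_1}(u,v)\le 2\alpha\le 2\,c_{G_2}(u,v)$ is valid only on boundary pairs; on a pair with both weights strictly below $\alpha$ one has $c_{G_2}(u,v)=w_u+w_v$, which may be smaller than $\alpha$, so there you must fall back on the equality $c_{G_1}(u,v)=c_{G_2}(u,v)$ that you had already noted --- a phrasing slip rather than a gap.
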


\begin{proof}
$\Gamma=\langle  V, (w_u)_{u\in V}, \alpha,1 \rangle$.  
Observe that due to the conditions given in 
Propositions~\ref{b1:ne} and~\ref{b1:opt} 
the social cost of an \OPT graph is   
\[\sum_{\{\{u,v\}\mid w_u+w_v\geq \alpha\}} \hspace{-24pt} \alpha \ \  \ \  \ +  
\sum_{\{\{u,v\}\mid w_u+w_v < \alpha\}}  \hspace{-24pt} (w_u+w_v) ,\]
and the social cost of a \NE graph with minimum  number of edges, i.e., 
one in which all the  optional are not present,  is at most
\begin{align*}
 \sum_{\{\{u,v\}\mid w_u> \alpha \text{ or }  w_v> \alpha\}} \alpha  &+  
 \sum_{\{\{u,v\}\mid w_u,  w_v \leq \alpha\}}  (w_u+w_v) =\\
  =  & \sum_{\{\{u,v\}\mid w_u> \alpha \text{ or }  w_v> \alpha\}} \alpha\\
      &  + \sum_{\{\{u,v\}\mid w_u,  w_v \leq \alpha \text{ and } w_u+w_v = \alpha \}} \alpha \\
      &  + \sum_{\{\{u,v\}\mid w_u,  w_v \leq \alpha \text{ and } w_u+w_v < \alpha \}}  (w_u+w_v)\\
      & + \sum_{\{\{u,v\}\mid w_u, w_v  \leq  \alpha  \text{ and }  w_u+w_v > \alpha \}}  (w_u+w_v)
\end{align*}
Observe that  the difference with the cost of an \OPT graph is in the last term 
 \[D= \{\{u,v\}\mid  w_u, w_v \leq \alpha \text{ and } w_u + w_v >  \alpha\}.\] 
Notice that $\{u,v\}\in D$ contributes to the cost of an \OPT 
graph with $\alpha$ and to the cost of a \NE graph with $w_u + w_v$.
By taking $\Gamma$ with $w_u=\alpha$, for any $u\in V$, 
we can maximize the size of $D$ and this leads to the worst possible \NE graph. 
For such a $\Gamma$, $I_n$ is a \NE graph 
and  we have that $C(I_n)=\sum_{u,v\in V, u<v} (w_u +w_v) =\alpha n(n-1)$.  
Furthermore,  in any \OPT graph of $\Gamma$, all the edges will be present, 
thus we have $\OPT = \alpha n(n-1)/2$. Thus 
\[\PoA (\Gamma) \leq \frac{ n(n-1)\alpha}{\alpha n (n-1)/2} = 2 \]  
Observe that when  $w_u=\alpha$, for any $u$,  
the complete graph is also a \NE graph and thus we have that 
the ratio between the social cost of the worst  and the best \NE graph is bounded by 2.   
\end{proof}

Observe that when $\alpha < w_{min}$  the unique \NE is a 
complete graph which is also an \OPT graph. 
Taking into account that the relationship among \ourgames and \maxbds 
provided in Proposition~\ref{CelebritiesVSMaxBD} 
also holds for $\beta=1$ we can conclude. 

\begin{corollary} For $\beta=1$, the \PoA and the \PoS of 
\maxbds and  \ourgames with $\alpha < w_{min}$  is 1. 
\end{corollary}

\section{Conclusions and Open Problems}
\label{sec:conclu}

We have introduced  the  \ourgames model aiming to address 
the creation of networks in a scenario where the nodes or 
players may have different weights and where the requirement 
of being close to  a global critical distance has to be balanced against  the node weights.
Our results provide further  understanding of the structural properties 
of stable networks. 
We have  shown that the critical distance affects directly  the diameter of the stable networks. 
For \stargames the diameter is $\leq 2\beta + 1$  and, 
in the case that the \NE graph is not a tree, the diameter is  $> \beta/2$. 
Furthermore, this critical distance, jointly with  player weights and link establishment cost, 
have implications on the quality of the \NE.  
We have shown that the  \PoA of \ourgamens\ is $O(\min\{n/\beta,W/\alpha\})$ and, 
for $\beta=2$, we have found games whose \PoA is $\Omega(n)$.  
In contra-position restricting the \NE to be trees the \PoA is constant.

We can observe that,  as one can expect,  
enlarging the value of the critical distance improves the quality 
of equilibria. 
Furthermore, if the total  game weight  $W=O(\alpha)$, 
the \PoA is $O(1)$. Corresponding to the intuition that when  
player's weights are negligible players prefer to be isolated.  
In contrast,  when  all the players are celebrities, even though 
their weights could be very different, players prefer to be closer, 
and the \NE graphs have diameter $\leq \beta$. 
In this latter case, the upper bound on the \PoA  
obtained in~\cite{BiloGP:15} for \maxbds ameliorates 
the upper bound of \ourgames. 

It still remains open to  shorten the gap between
the lower and upper bounds on the \PoA. 
Our results are only tight for $\beta=1$ and $\beta=2$. 
The cases where $\beta$ is constant are of particular interest.  
In the family of graphs providing the lower bound on the \PoA 
not all the nodes are celebrities, so our result has no implication for \maxbds. 

Further questions of interest are to study natural variations of our framework. 
Among the many possibilities, we propose
 to analyze \ourgames under
(i)  the Max cost model (work in progress),
(ii) other definitions of the social cost.

Finally, we have not considered the non uniform version 
where each player $u$ can have its own critical distance $\beta_u$. 
\cite{BiloGP:15} showed that the \PoA of \maxbd is $\Omega(n)$ even 
for the non uniform model with only two distance-bound values. 
As we have mentioned before such a negative result for \maxbds  
translates to the \ourgames when all the players are celebrities. 

\section*{Acknowledgements}
We thank anonymous reviewers for all their useful comments and suggestions, 
they helped, undoubtedly, to improve the quality of this paper.

This work was partially
supported by funds from the AGAUR of the 
Government of Catalonia under project ref. SGR 2014:1034 (ALBCOM). 
C. \`{A}lvarez, A. Duch  and M. Serna were partially supported by the 
Spanish Ministry for Economy and Competitiveness (MINECO) and 
the European Union (FEDER funds), under grants ref. TIN2013-46181-C2-1-R (COMMAS). 
M. Blesa was partially supported by the Spanish Ministry for 
Economy and Competitiveness (MINECO) and the European Union (FEDER funds), 
under grant TIN2012-37930-C02-02.


\end{document}